\newtheorem{theorem}{Theorem}[section]
\newtheorem{lemma}[theorem]{Lemma}
\newtheorem{proposition}[theorem]{Proposition}
\newtheorem{corollary}[theorem]{Corollary}
\newtheorem{remark}[theorem]{Remark}
\newtheorem{example}{\it Example\/}
\font\bigbf=cmbx10 scaled \magstep3
\numberwithin{equation}{section}
\begin{document}
	
	\title{\bigbf  Macroscopic fundamental diagram with volume-delay relationship: Model derivation, empirical validation and invariance property}

	\author{Ke Han$^a\thanks{Corresponding author, e-mail: kehan@swjtu.edu.cn;}$
		\quad Tao Huang$^a$
		\quad Wenbo Fan$^a$
		\quad Qian Ge$^a$
		\quad Shihui Dong$^a$
		\quad Xuting Wang$^b$
		\\\\
		$^a$\textit{\small Institute of System Science and Engineering, School of Transportation and Logistics,}
		\\
		\textit{\small Southwest Jiaotong University, Chengdu 611756, China}
		\\
		$^b$\textit{\small Institute of Smart City and Intelligent Transportation,}
		\\
		\textit{\small Southwest Jiaotong University, Chengdu 611756, China}
	}
	
	\maketitle

	\begin{abstract}		
		This paper presents a macroscopic fundamental diagram  model with volume-delay relationship (MFD-VD) for road traffic networks, by exploring two new data sources: license plate cameras (LPCs) and road congestion indices (RCIs). We derive a first-order, nonlinear and implicit ordinary differential equation involving the network accumulation (the {\it volume}) and average congestion index (the {\it delay}), and use empirical data from a 266 km$^2$ urban network to fit an accumulation-based MFD with $R^2>0.9$. The issue of incomplete traffic volume observed by the LPCs is addressed with a theoretical derivation of the observability-invariant property: The ratio of traffic volume to the critical value (corresponding to the peak of the MFD) is independent of the (unknown) proportion of those detected vehicles. Conditions for such a property to hold are discussed in theory and verified empirically. This offers a practical way to estimate the ratio-to-critical-value, which is an important indicator of network saturation and efficiency, by simply working with a finite set of LPCs. The significance of our work is the introduction of two new data sources widely available to study empirical MFDs, as well as the removal of the assumptions of full observability, known detection rates, and spatially uniform sensors, which are typically required in conventional approaches based on loop detector and floating car data. 
\end{abstract}

	\noindent {\it Keywords:} Macroscopic fundamental diagram; bathtub model; license plate cameras; congestion index; empirical analysis

\section{Introduction}
The Macroscopic Fundamental Diagram (MFD) is an aggregate model that describes the network-wide traffic dynamics in terms of space-mean density, accumulation and throughput. Such a concept is traced back to \cite{Godfrey1969}, and the existence of empirical MFD was first demonstrated by \cite{GD2008} in a field experiment carried out on a 10 km$^2$ network in Yokohama (Japan), based on loop detectors and taxi GPS data. Before that, \cite{Daganzo2007} and \cite{GD2007} have shown that such MFD can be used for modeling network-level traffic dynamics. These findings have brought the MFD to great attention in transportation research.

Since 2008, a great number of applications or extensions based on MFD have been proposed, such as perimeter control \citep{Geroliminis2012, HM2020, Keyvan2012, Keyvan2015a, Keyvan2016, SG2019, STKG2021, Hamedmoghadam2022, Tsitsokas2023, Jiang2023, Zhou2021, Zhou2023}, congestion pricing \citep{Zheng2012, DL2015, Zheng2016, Gu2018, Vickrey2020}, parking charges \citep{ZG2016, Leclercq2016}, route guidance \citep{Yildirimoglu2015, SG2017}, dynamic traffic assignment \citep{YG2014, BL2019, GFHS2019}, shared mobility \citep{Wei2020, BG2021}, bus dispatching \citep{Zhang2020, Dakic2021} and network design \citep{Loder2022}. The MFD has also inspired investigation of parsimonious models in other domains of inquiry such as pedestrian flows  \citep{SM2014, Hoogendoorn2018}, train flows \citep{Corman2019}, and aerial vehicles \citep{YYHHH2017, CM2021}. All of these studies assume that a well-defined MFD exists under some conditions. Hence, accurately and reliably estimating the MFD from empirical data is a prerequisite for the application of these methods or models.

In the literature, MFD is mainly estimated based on loop detector data (LDD) \citep{BL2009, CL2011, Ortigosa2014, Zockaie2018}, floating car data (FCD) \citep{NG2013, Du2016, Knoop2018}, or a combination of both \citep{Leclercq2014, Beibei2018, AM2016, Amhl2017, Saffari2022}. However, the LDD and the FCD  themselves have limitations that introduce errors into MFD estimates.	 As a fixed-location sensor, a loop detector cannot correctly estimate the average traffic density on the link, and a group of loop detectors bring inhomogeneous bias to the estimation of the network density \citep{BL2009, CL2011}. On the other hand, floating car data are effective in estimating the MFD only when the probe vehicles are uniformly distributed in the network, and the penetration rate is known a priori \citep{NG2013, Du2016, Knoop2018}. Although attempts are made to combine LDD with FCD to circumvent the aforementioned issues \citep{AM2016, Saffari2022}, they are either reliant on the {\it a priori} knowledge of probe penetration rate or susceptible to the uneven distribution of observers. Moreover, it has been recognized that sensor reliability brings additional challenges in real-world operations: \cite{Beibei2018} found that only 50\% of the loop detectors provide reliable counts in their case study. Finally, recent studies have introduced emerging data collection paradigms such as location based services (LBS)  \citep{Paipuri2020} and unmanned aerial vehicles (UAVs)  \citep{Paipuri2021} to the estimation of MFDs, but these technologies are still in a developing stage and not widely available. A detailed review of these approaches is presented in Section \ref{secLR}.

Aiming to address the aforementioned limitations of using LDD and FCD for MFD estimation (i.e. uniform detector distribution and known detection rates), this paper introduces a new method for estimating MFDs by exploring two data sources widely available in major cities around the globe: License plate cameras (LPCs) and road congestion indices (RCIs). The former are widely seen as traffic policy and law enforcement tools, and the latter are typically from navigation service providers such as Google, Baidu and Amap. Using these data, we propose an accumulation-based MFD \citep{Daganzo2007, GD2007} and fit it with empirical data in a 266 km$^2$ network in central Chengdu, China. This MFD model is governed by a first-order, implicit nonlinear ordinary differential equation (ODE) that involves the instantaneous network accumulation (the `{\it volume}') and the network-averaged congestion index (the `{\it delay}'). Hence, it is termed the volume-delay based MFD (MFD-VD). The main theoretical contribution and practical significance of our work are as follows:
\begin{itemize}
\item We derive the MFD-VD model for two new data sources that are widely available, with straightforward data processing and MFD fitting methods that promise good accuracy (R$^2>0.9$).

\item We theoretically prove and empirically validate that the ratio of network volume to the critical volume (corresponding to the peak of the MFD) is invariant w.r.t. the level of partial observation provided by the LPCs. This effectively removes the assumption of full observability or {\it a priori} known proportion of observed traffic, commonly required in existing studies. It is worth noting that the ratio-to-critical-value is also invariant w.r.t. the time step size, which is attributed to the same invariance principle, provided that a smaller time window is considered {\it temporally partial observation} within a larger time window.

\item  As a result, the ground-truth ratio-to-critical-value, which is an important indicator of network saturation and efficiency, can be estimated by working with a finite set of LPCs. Requirements on the spatial distribution of the LPCs are analyzed in depth, which are considerably weaker than those for loop detectors, and offer a practical guide to the application of our methods.
\end{itemize}

Finally, the practical significance is two-fold: (1) Most existing MFD modeling approaches utilize loop detector data, which are not available in many American or Asian cities, while this study offers a modeling framework based on traffic camera data, which are widely available. (2) Compared to existing approaches, the proposed MFD model does not rely on the uniform distributions of detectors, nor does it require knowledge of the detection rates, which means it can be applied to more cities and scenarios. 

The rest of the paper is organized as follows. Section \ref{secLR} provides a brief review of relevant studies in MFD estimation, Section \ref{secData} describes the LPC and RCI data used in this study. Section \ref{secMDP} derives the MFD-VD model and analyzes the invariance property as well as its sufficient conditions. Section \ref{secES} comprehensively evaluates the MFD-VD model, and validates the invariance property as well as its assumptions, followed by an impact analysis of different time step sizes. Finally, Section \ref{secConclude} provides some concluding remarks.

\section{Related work}\label{secLR}
The existing studies mainly employed loop detector data and probe vehicle data for estimating the MFD. Table \ref{MFDestimation} summarizes studies that aim to estimate the MFD using simulation or real data, along with the type of traffic data which was employed.

\begin{table}[h!]
\centering
\caption{Summary of existing studies on MFD estimation}
\label{MFDestimation}
\begin{tabular}{c cc cc}
\hline
& \multicolumn{2}{c}{Type of study} & \multicolumn{2}{c}{Type of data} 
\\
\hline
Study  &  Real-world & Simulation & LDD & FCD
\\
\hline
\cite{GD2008}  & \checkmark &  & \checkmark & \checkmark 
 \\
\cite{BL2009}   & \checkmark &   & \checkmark & 
 \\
\cite{CL2011}   &  & \checkmark  & \checkmark &  
 \\
\cite{Lu2013} & \checkmark &   & \checkmark &  \checkmark
 \\ 
\cite{NG2013}   &  & \checkmark  &  &  \checkmark
 \\
\cite{Ji2014}   & \checkmark &   &  &  \checkmark
\\
\cite{Leclercq2014}   &  &  \checkmark & \checkmark & \checkmark 
 \\
\cite{Ortigosa2014} &  & \checkmark &  \checkmark &
 \\
\cite{Saberi2014} &  & \checkmark  &  &  \checkmark
 \\
\cite{Tsubota2014} & \checkmark  &  & \checkmark  &\checkmark 
 \\
\cite{Du2016}   &  & \checkmark  &  &  \checkmark
 \\
\cite{AM2016}  &  & \checkmark  & \checkmark &  \checkmark
 \\ 
 \cite{Amhl2017}  & \checkmark &   & \checkmark &  \checkmark
 \\ 
 \cite{Amhl2018}  & \checkmark &   & \checkmark &  
 \\ 
 \cite{Dakic2018}   &  \checkmark &   &  \checkmark &   \checkmark
 \\ 
\cite{Beibei2018}   &  \checkmark &   &  \checkmark &   \checkmark
 \\ 
\cite{Knoop2018} & \checkmark &  &  & \checkmark
 \\
 \cite{Zockaie2018} &  & \checkmark & \checkmark & \checkmark
 \\
\cite{Huang2019}   & \checkmark &   &  &  \checkmark
 \\ 
\cite{Loder2019}   & \checkmark &   &  &  \checkmark
 \\  
\cite{Kavianipour2019} &  &\checkmark &  \checkmark & \checkmark
 \\
\cite{Mariotte2020}  &  & \checkmark  &  \checkmark &  
 \\
\cite{Saffari2020}   &  & \checkmark  &  \checkmark &  
 \\ 
 \cite{Amhl2021}  & \checkmark &   & \checkmark  &  
 \\ 
 \cite{Saeedmanesh2021} &  & \checkmark  & \checkmark  &
 \\
\cite{Saffari2022}   & \checkmark &  & \checkmark &  \checkmark
\\
\cite{Saffari2023}   &   & \checkmark   &  & \checkmark 
 \\\hline
\end{tabular}
\end{table}

\subsection{Loop detector data (LDD) for estimating MFDs}
	
Loop detectors, as point-based sensors, cannot accurately reflect the average link density on urban arterials, and the location of the loop detectors on the link or in the network may significantly change the measurements \citep{BL2009, CL2011}. \cite{CL2011} point out experimentally that loop detectors need to cover as many different traffic situations as possible (upstream and downstream of the traffic signals and in the middle of the road section) for the MFD be correctly estimated. A correction method is proposed by \cite{Leclercq2014}, who use the kinematic wave theory to estimate spatially averaged link density to minimize the density variance when employing loop detectors for MFD estimation. However, the improvement of accuracy is less significant in the congested parts of the network. \cite{Ortigosa2014} and \cite{Zockaie2018} propose a mathematical framework to find the optimal number of devices as well as their locations to minimize the estimation error for the MFD, nonetheless both requiring {\it a priori} knowledge of the ground-truth MFD. In a recent study, \cite{Saffari2020} apply principal component analysis (PCA) to find critical links where loop detectors should be installed, and calculate density values for all links, from which the MFD is estimated.

\subsection{Floating car data (FCD) for estimating MFDs}

As an alternative to loop detector data, floating car data from the GPS trajectories of probe vehicles provide Lagrangian observations regarding traffic states. According to \cite{Leclercq2014}, the only way to estimate the MFD without bias is to obtain information about the trajectories of all the vehicles in the network and apply Edie's definition, which is impossible in practice. A common practice is to assume known penetration rate of probe vehicles in order to extrapolate measurements to the entire network \citep{NG2013}. In addition, using floating car data to estimate the average network flow and average density requires that the probes are uniformly distributed in the network, which may not hold in reality. \cite{Du2016} propose a method for estimating the average network density consistent with Edie's general definition while addressing the non-uniform distribution of probes across the network, by assuming known penetration rate per O-D pair. \cite{Knoop2018} use a large FCD dataset provided by Google (5-minute traffic and density proxies aggregated from vehicle trajectories), which is characterized by high penetration and low inhomogeneity, to obtain  a well-defined and clear (with low dispersion) MFD for Amsterdam's network.

\subsection{LDD and FCD combined}

Some studies have explored the combination of LDD and FCD for empirical MFD estimation \citep{Leclercq2014, Beibei2018, AM2016, Amhl2017, Saffari2022, Tsubota2014}.  \cite{Leclercq2014} find through simulation experiments that combining information from FCD (average speed of the network) and LDD (average flow of the network) can provide accurate results even with a low sampling rate. But the sampling rate in their experiments is set known in advance, and the issue of non-uniform sampling is not addressed.  \cite{Beibei2018} consider both LDD and FCD, and bring to attention the issue of device reliability, where in a case study conducted in Changsha, it was found that about 25\% of the loop detectors throughout the urban network did not work at all, and only 50\% of the loop detectors provided reliable counts. This means that even under the same traffic condition, one may have different MFDs due to different sensor availability. \cite{Amhl2017} and \cite{AM2016} combine LDD and FCD, partially eliminating the critical drawbacks of these two data sources, and improve the accuracy of MFD estimation. But this approach may be susceptible to uneven distribution of detected vehicles. \cite{Saffari2022} propose a Bayesian data fusion approach to combine two traffic data sources, which relaxes the assumption of a uniform distribution, by utilizing probe vehicles traveling between specific OD pairs.

	\subsection{Other data sources for estimating MFDs}

	Emerging data sources have begun to take place in MFD studies as ICT and IoT technologies advance. \cite{Paipuri2020} propose a global framework to estimate MFD model parameters, including MFD shapes, regional trip lengths and path flow distribution, using Location Based Service (LBS) data from mobile phones. One of the distinct advantages of LBS over FCD data is that the former ensures a wide coverage of the population in the urban network, and hence higher penetration rates are observed. \cite{Paipuri2021} conducted an experiment called pNEUMA (New Era of Urban traffic Monitoring with Aerial footage) in Athens, Greece, using a swarm of UAVs to monitor more than 100,000 trajectories and their patterns in a large urban area with more than 200 road segments and 60 bus stops. \cite{Fu2020} combined four different sources of data, including GPS data of bus and car, loop detector data and transaction records of smart cards, to investigate the vehicle and passenger MFDs for cars and buses in the road network of Shenzhen.

\section{Description of empirical data}\label{secData}

Since the proposed MFD-VD model relies on two new data sources, we begin our technical discussion with the description of these data and their characteristics. The study area is located in central Chengdu, China, which is the road network within the 3rd Ring Road, spanning 266 km$^2$. The study area is under surveillance by 3,905 license plate cameras that record every passing vehicle. In addition, each of the 310 main road segments has a traffic congestion index with a time resolution of 10 min provided by Autonavi.com, a traffic navigation and digital map service provider. A larger congestion index means higher congestion or delay. Figure \ref{fignetwork} shows the locations of the 3,905 license plate cameras (left) and traffic congestion indices at 9:00 on 14 Feb.

\begin{figure}[h!]
\centering
\includegraphics[width=\textwidth]{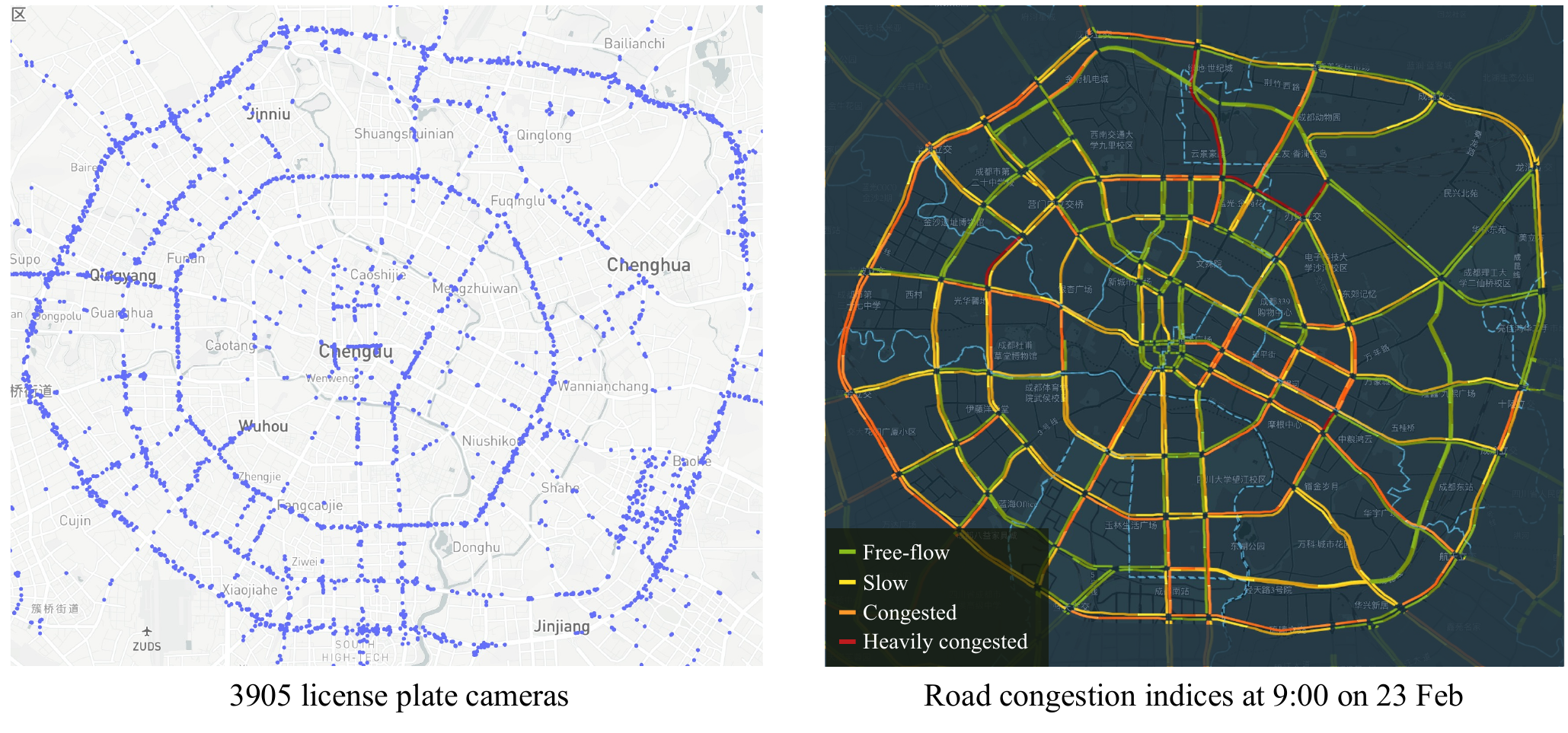}
\caption{Road network within the 3rd Ring Road in Chengdu.}
\label{fignetwork}
\end{figure}

\subsection{Network traffic volume}

This work is rooted in the notion of network traffic volume, which is characterized using license plate camera data. An elaborated treatment of such a notion is presented in Section \ref{subsecVandMFD}. Here in Figure \ref{figlongts}(a), we provide a visualization of network traffic volume observed by 1220 cameras within the study area from 1 Jan to 24 Feb, 2022. The hourly network volume is calculated as the number of distinct vehicles (license plates) captured by the 1220 cameras within one hour. Obviously, more cameras would capture more vehicles, which is confirmed later in our study (see Figure \ref{figalljan}). In fact, we make two observations: (1) The hourly network volume has a strong dependence on the camera set (in terms of both number and distribution); (2) no camera set in the real world can capture all vehicles in the network, leading to unknown proportion of those observed traffic. These observations raise considerable challenge to volume-based network modeling, which are tackled in this work.

\subsection{Road and network congestion indices (delay)}

The road congestion index (RCI), provided by Autonavi.com, is defined as the ratio between the actual travel time and the free-flow travel time of a road segment. In some literature, such a ratio is also termed {\it delay}, which will be used interchangeably with congestion index throughout this paper. Based on these indices, Autonavi categorizes the status of a road as free-flow, slow, congested and heavily congested, which are color-coded in a digital map. The correspondence between traffic status (categorical) and RCI (numerical) is illustrated in Table \ref{tabRCI}. As Autonavi only provides traffic status data, they are converted to numerical values of RCI as shown in the last row of the table. These RCI values are chosen as the median of the respective intervals; for example, $1.25=(1.0+1.5)/2$, $1.75=(1.5+2.0)/2$. Later in Section \ref{subsecvalcor}, we will experiment on different choices of these values and assess their impact on the results. 

\begin{table}[h]
\centering
\caption{Road congestion index (RCI) and traffic status}
\label{tabRCI}
\begin{tabular}{c|c|c|c|c}
\hline
      Traffic  status & Free-flow & Slow  & Congested & Heavily congested
    	\\
    	\hline
	 Color code & Green & Yellow  & Orange & Red
    	\\
    	\hline
	RCI interval & $[1.0,\, 1.5)$  & $[1.5,\, 2.0)$  & $[2.0,\, 4.0)$ & $[4.0,\, \infty)$ 
        \\
        \hline
	Value used  & \multirow{2}{*}{1.25} & \multirow{2}{*}{1.75}  & \multirow{2}{*}{3.0} & \multirow{2}{*}{5.0}
    	\\
	 in this paper & & & &
	\\
    	\hline
\end{tabular}
\end{table}

The original RCIs are updated every 10 min. The hourly RCIs are calculated as the average over these 10-min intervals. The hourly {\it network congestion index}, as shown in Figure \ref{figlongts}(b) for the period 1 Jan - 24 Feb 2022, is calculated as the mean over the 310 main road segments.

\begin{figure}[h!]
\centering
\includegraphics[width=\textwidth]{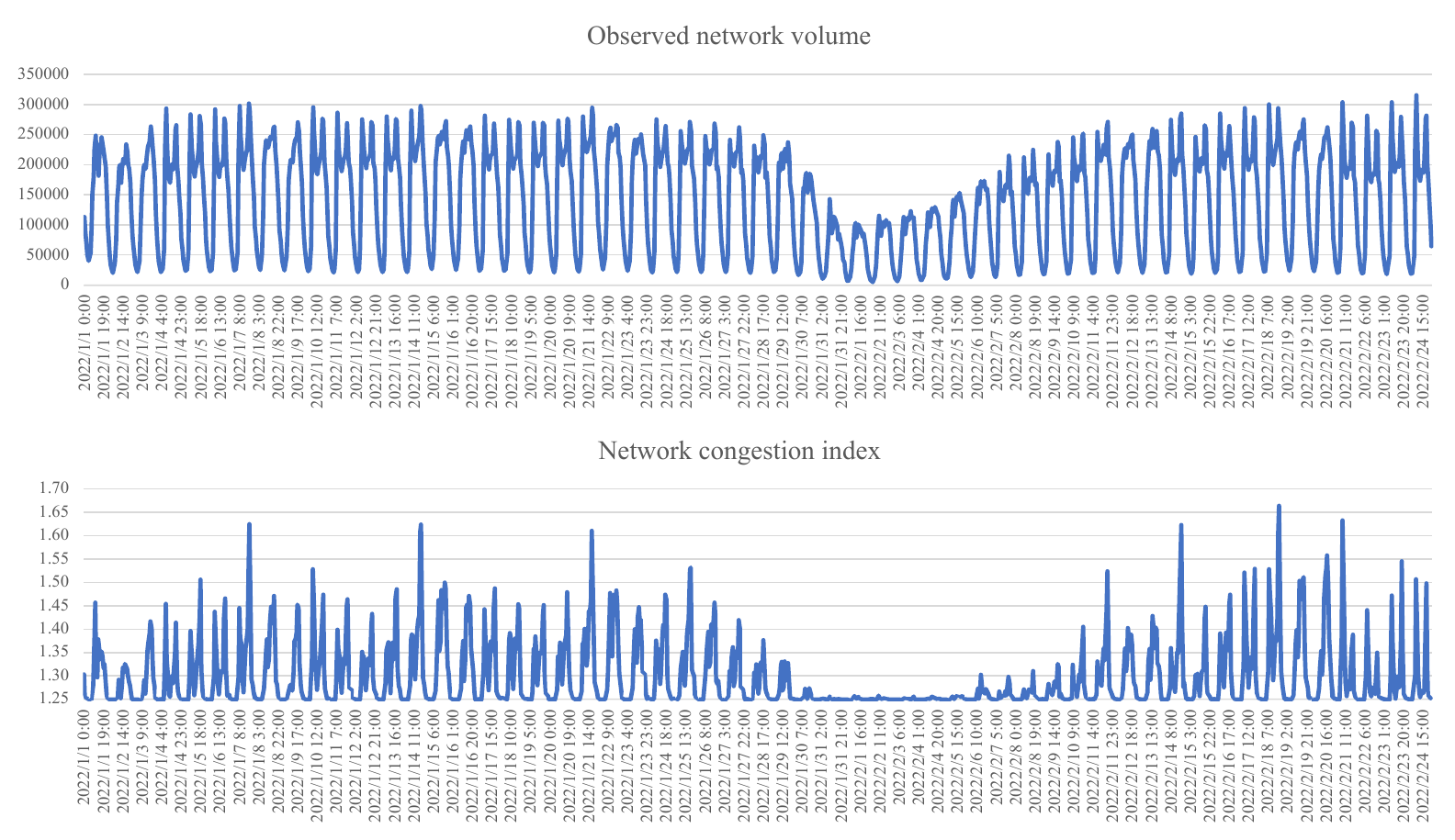}
\caption{(a) Observed hourly network volume  from 1,220 traffic cameras; (b) Network congestion index (hourly) averaged from the RCIs of 310 main road segments. The significantly lower traffic activities around 1 Feb are due to the Chinese Lunar New Year.}
\label{figlongts}
\end{figure}

\subsection{Daily traffic variations}

We primarily focus on workdays as the trip purposes and trip characteristics (e.g. O-D distribution, average trip length) are similar. By excluding weekends and national holidays (New Year and Chinese Lunar New Year), the hourly network volumes (based on the 1220 cameras) and network congestion indices (delays) on 33 workdays are statistically summarized in Figure \ref{figDV}. It can be seen that traffic (in terms of both volume and delay) fluctuates on a daily basis, to the extent indicated by the 95\% confidence intervals. In addition, the daily variations in delays are larger during morning and afternoon peaks, which suggests the presence of subtle dynamics occurring at the link level, as well as other external factors such as traffic incidents or weather that could influence traffic conditions. 

\begin{figure}[H]
\centering
\includegraphics[width=\textwidth]{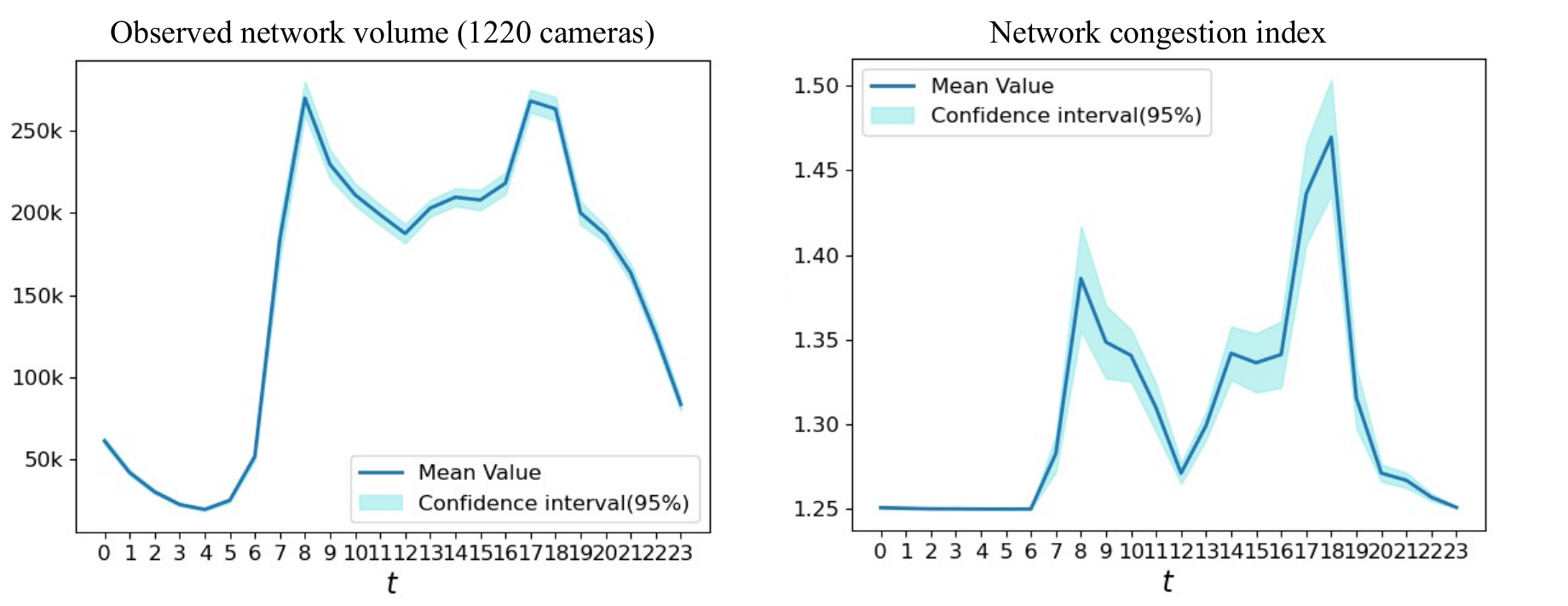}
\caption{Within-day observed network volume (left) and network congestion index (right), calculated from the 33 workdays between 1 Jan  and 24 Feb 2022.}
\label{figDV}
\end{figure}

\section{Model derivation and properties}\label{secMDP}

\subsection{Network volume and accumulation-based MFD}\label{subsecVandMFD}

In this section, time $t$ is treated as a continuous variable. The network traffic volume (accumulation) is denoted $V(t)$, which represents the number of vehicles present in the network at time $t$. While such a notion is well defined, it requires some adjustment if one attempts to interpret it using license plate camera data. Indeed, imagine that every directed link in the network is equipped with a camera, which records passing vehicles at a specific location along the link. The total number of vehicles recorded by these cameras at time instance $t$ could be any non-negative integer (even zero), despite that the network links are fully covered by those cameras. Therefore, it is only natural to define an observational interval $[t-\delta,\,t+\delta]$ to allow vehicles to pass those camera locations so that they are recorded. For partial network coverage by the cameras, which is the case in the real world, the observational interval should allow adequate sampling of network vehicles (e.g. spanning 1 hr). Given the above argument, in the rest of this paper, network volume $V(t)$ is interpreted as the number of distinct vehicles recorded within interval $[t-\delta,\,t+\delta]$ by cameras distributed ubiquitously (that is, everywhere) in the network. Such a definition is convenient when we define partially observed network volume $V_{\mathcal{C}}(t)$, using camera set $\mathcal{C}$ providing only partial coverage of the network.

We define $W(t)$ to be the rate at which traffic leave the network, and $U(t)$ to be the rate at which traffic enter the network. Here, $W(t)$ ($U(t)$) refers to both traffic exiting (entering) the network through the boundary and traffic finishing (starting) their trips within the network. The accumulation-based macroscopic fundamental diagram (MFD), following \citep{Daganzo2007}, suggests the following relationship: 
\begin{equation}\label{eqn0}
W(t)=G\big(V(t)\big),
\end{equation}
where $G(\cdot)$ is a continuous concave function, vanishing at $V=0$ and $V=V_{\text{max}}$. In the literature, the accumulation-based MFD have incorporated both constant trip lengths \citep{Daganzo2007, Mariotte2017} and time-independent negative exponentially distributed trip lengths \citep{Vickrey2020}. Other researchers have proposed the basic bathtub model \citep{AB2018, AK2022}, which assumes that all travelers have the same trip distance. Furthermore, the trip-based model \citep{LG2018} and the generalized bathtub model \citep{Jin2020} appear more flexible as they have been applied with different trip distance distributions.

\subsection{Extending link delay to network-averaged delay}
\label{section:2.1}
This part converts link-level delays (congestion indices) to network-level delay by performing link and route aggregation. Let $L$ and $R$ be the sets of links and routes in the network, respectively. We consider an arbitrary trip with route $r=\{l_1,\ldots, l_{n_r}\}$ expressed as an ordered set of links.

For an arbitrary link $l$, we define $\Phi_l\big(t_l,\,t_l+\tau_l\big)$ to be the time-averaged RCI over $[t_l,\,t_l+\tau_l]$, where $t_l$ is the departure time at link $l$ and $\tau_l$ is the (time-varying) travel time on link $l$.  The trip time on route $r\in R$ with departure time $t$ is 
\begin{equation}\label{taureqn}
\tau_r(t)=\sum_{l\in r} \text{TT}_l(t_l)=\sum_{l\in r}\text{FTT}_l \cdot \Phi_l\big(t_l,\,t_l+\tau_l\big)
\end{equation}
\noindent where $\text{TT}_l(t_l)$ denotes the travel time on link $l$ with departure time $t_l$, and $\text{FTT}_l$ denotes the free-flow travel time of link $l$. The network-wide average trip time $\tau(t)$ is defined to be the trip-weighted average:
\begin{equation}\label{eqnomegar}
\tau(t)={1\over\displaystyle\sum_{r\in R}{\omega_r}}\sum_{r\in R}\omega_r \tau_r(t)
\end{equation}
\noindent where $\omega_r$ is the weight of route $r$, which can be set as the trip volume along $r$. The following deductions are made based on \eqref{taureqn}-\eqref{eqnomegar}:
\begin{align}
\nonumber
\tau(t)&={1\over\displaystyle\sum_{r\in R}{\omega_r}} \sum_{r\in R}\omega_r\tau_r(t)={1\over\displaystyle\sum_{r\in R}{\omega_r}}\sum_{r\in R}\omega_r\sum_{l\in r}\text{FTT}_l\cdot \Phi_l\big(t_l,\,t_l+\tau_l\big)
\\
\nonumber
&\approx \Phi\big(t,\,t+\Delta t\big){1\over\displaystyle\sum_{r\in R}{\omega_r}}\sum_{r\in R}\omega_r\sum_{l\in r}\text{FTT}_l= \Phi\big(t,\,t+\Delta t\big)\cdot {1\over\displaystyle\sum_{r\in R}{\omega_r}}\sum_{r\in R} \omega_r\text{FTT}_r
\\
\label{eqndass0}
&=\Phi\big(t,\,t+\Delta t\big)\cdot \tau^0
\end{align}
\noindent where $\text{FTT}_r$ is the free-flow travel time along route $r$, $\tau^0$ is the trip-weighted average network free-flow travel time. $\Phi\big(t,\,t+\Delta t)$ is the time-averaged network congestion index over $[t,\,t+\Delta t]$
\begin{equation}\label{Phi}
\Phi\big(t,\,t+\Delta t\big)={1\over |L|}\sum_{l\in L} \Phi_l\big(t,\,t+\Delta t\big)
\end{equation}
and $\Delta t$ is a fixed constant (usually chosen to be the longest travel time in the network).

\begin{remark}
Eqn \eqref{eqndass0} suggests that $\Phi(t, t+\Delta t)$ approximates the mean of individual $\Phi_l(t_l, t_l+\tau_l)$'s, over all routes $r$, links $l\in r$, and times $t_l\in [t, t+\Delta t]$:
\begin{equation}\label{eqnproxi}
\Phi(t, t+\Delta t)\approx {{1\over\sum_{r\in R}{\omega_r}}\sum_{r\in R}\omega_r\sum_{l\in r}\text{FTT}_l\cdot \Phi_l\big(t_l,\,t_l+\tau_l\big)\over {1\over\sum_{r\in R}{\omega_r}}\sum_{r\in R}\omega_r\sum_{l\in r}\text{FTT}_l}
\end{equation}
It is worth noting that such a proxy does not require the $\Phi_l(t_l, t_l+\tau_l)$'s to be homogeneous in space or time. In fact, this approximation captures both spatial and temporal variations, as well as route choices in the network. The validity of the approximation \eqref{eqnproxi} is demonstrated in the Appendix for the target network. Finally, for \eqref{eqnproxi} to be valid, links with known RCIs should be evenly distributed in the network and include major roadways and arterials. 
\end{remark}

For notation convenience, we define $D(t)\doteq \Phi\big(t,\,t+\Delta t)$, such that \eqref{eqndass0} becomes 
\begin{equation}\label{eqndass}
\tau(t)=D(t)\cdot\tau^0
\end{equation}

\noindent We now arrive at a macroscopic relationship \eqref{eqndass}, which simply means the average network-wide trip time can be expressed as the product of $\tau^0$, the network average free-flow time, and $D(t)$, the network-averaged congestion index. Here in \eqref{eqndass}, $\tau^0$ is an unknown parameter, which will be estimated in conjunction with the MFD model based on empirical data; see Section \ref{subsecnumfitting}.

\subsection{MFD-VD model with full network volume}
\label{section:2.2}
In this section, we derive the MFD model using a volume-delay relationship (MFD-VD). The time-varying network volume $V(t)$ is defined at the beginning of Section \ref{subsecVandMFD}, which is linked to the network delay $D(t)$ via the MFD function as follows. According to flow conservation:
\begin{equation}\label{eqn1}
\dot V(t)= U(t)- W(t)= U(t)-G\big(V(t)\big)
\end{equation}
\noindent On the other hand, the flow propagation constraint, which can be also interpreted as flow conservation constraint over time, reads:
\begin{equation}\label{eqn2}
\int_{t_0}^tU(\xi)d\xi=\int_{t_0+\tau(t_0)}^{t+\tau(t)} W(\xi)d\xi\qquad \forall t
\end{equation}
\noindent where $t_0$ is some arbitrary reference time, $\xi$ is the integration dummy variable. The meaning of \eqref{eqn2} is straightforward: Traffic entering the network during $[t_0,\,t]$ leave the network during $[t_0+\tau(t_0),\, t+\tau(t)]$. Differentiating \eqref{eqn2} w.r.t. $t$ renders
\begin{equation}\label{eqn3}
U(t)=W\big(t+\tau(t)\big)\big(1+\dot \tau(t)\big)
\end{equation}

\begin{remark}
Eqn \eqref{eqn2} implies the {\it first-in-first-out} (FIFO) assumption on the network level, which has been employed in similar studies of macroscopic network dynamics \citep{AB2018, AK2022}. Note that \cite{AB2018} propose a basic bathtub model considering constant trip distances. The authors use the same formula in their paper as Eqns \eqref{eqn2}-\eqref{eqn3}. 
\end{remark}

Combining \eqref{eqn0}, \eqref{eqndass}, \eqref{eqn1} and \eqref{eqn3}, we have:
\begin{align}
\nonumber
\dot V(t)&=G\big(V(t+\tau(t))\big)\big(1+\dot \tau(t)\big)-G\big(V(t)\big)
\\
\label{eqn4}
&=G\left( V\left(t+ \tau^0 D(t)\right)\right)\left(1+\tau^0 \dot D(t)\right)-G\big(V(t)\big)
\end{align}

\noindent To circumvent the state-dependent time lag in \eqref{eqn4}, we further invoke the first-order Taylor expansion:
\begin{equation}\label{eqnTaylor}
V\big(t+\tau^0D(t)\big)\approx V(t) + \tau^0 D(t) \dot V(t)
\end{equation}
\noindent Plugging this formula back to \eqref{eqn4} renders the following first-order, implicit, nonlinear ordinary differential equation (ODE) for $V(\cdot)$\footnote{The ODE \eqref{eqn5} holds in an approximate sense due to \eqref{eqnTaylor}.}: 
\begin{equation}\label{eqn5}
\dot V(t)=G\left( V(t) + \tau^0 D(t) \dot V(t) \right) \left(1+\tau^0 \dot D(t)\right)- G\big(V(t)\big)
\end{equation}

\noindent In this ODE, $D(\cdot)$ and $\dot D(\cdot)$ are treated as known inputs because they can be estimated from the RCIs. Even if the congestion indices are not known for all road links in the network, as is our case with 310 main road links in Figure \ref{fignetwork}, $D(\cdot)$ can still be approximated as a sample mean. Therefore, full observability of the RCIs is not essential for our framework.

 \subsection{MFD-VD model with partially observed network volume}
 
 The ODE \eqref{eqn5} does not immediately admit numerical computation or empirical fitting, as it is impossible to know the true traffic volume $V(t)$. With a set $\mathcal{C}$ of license plate cameras, it is possible to partially observe the traffic volume, denoted $V_{\mathcal{C}}(t)$, which is the number of distinct vehicles captured by $\mathcal{C}$ within a unit interval centered around $t$. In this part, we work with $V_{\mathcal{C}}(t)$ in an attempt to derive the corresponding MFD-VD model with partial observation. We start by making the following critical assumption.

 \vspace{0.05 in}
 
\noindent {\bf Assumption (A1).} The total network traffic volume $V(\cdot)$ is a point-wise multiple of the volume $V_{\mathcal{C}}(\cdot)$ observed from a finite set $\mathcal{C}$ of cameras, meaning that $V(t)=k_{\mathcal{C}} V_{\mathcal{C}}(t), \forall t$ for some unknown constant $k_{\mathcal{C}}\geq 1$. 

 \vspace{0.05 in}
 
 The rationale behind this assumption is in fact quite intuitive, as we elaborate in Section \ref{secA1dis} with a discussion of conditions the LPCs should satisfy. The key point is that such conditions are considerably weaker than what was required for loop detectors, i.e. they should cover multiple traffic situations and be evenly distributed in the network \citep{CL2011, AM2016, Saffari2022}.

 Let $V(t)=k_{\mathcal{C}}V_{\mathcal{C}}(t)~ \forall t$. Provided that $V(\cdot)$ satisfies \eqref{eqn5}, we have 
\begin{equation}\label{VCODEt}
k_{\mathcal{C}} \dot V_{\mathcal{C}}=G\left(k_{\mathcal{C}}V_{\mathcal{C}}(t)+k_{\mathcal{C}}\tau^0 D(t)\dot V_{\mathcal{C}}(t)\right) \left(1+\tau^0\dot D(t)\right)-G\big(k_{\mathcal{C}}V_{\mathcal{C}}(t)\big)
\end{equation}
We define the transformed MFD function: 
\begin{equation}\label{Gc}
G_{\mathcal{C}}(x)={1\over k_{\mathcal{C}}} G(k_{\mathcal{C}}x)\qquad \forall x\in[0,\,V_{\text{max}}/k_{\mathcal{C}}],
\end{equation}

\begin{lemma}
If $G(\cdot)$ is a concave function that vanishes at $0$ and $V_{\text{max}}$, then $G_{\mathcal{C}}(\cdot)$ is also concave and vanishes at $0$ and $V_{\text{max}}/k_{\mathcal{C}}$.
\end{lemma}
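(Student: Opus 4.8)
The plan is to verify the two claimed properties directly from the definition $G_{\mathcal{C}}(x)=k_{\mathcal{C}}^{-1}G(k_{\mathcal{C}}x)$, exploiting that $x\mapsto k_{\mathcal{C}}x$ is an increasing affine bijection of $[0,\,V_{\text{max}}/k_{\mathcal{C}}]$ onto $[0,\,V_{\text{max}}]$, and that the outer rescaling by $k_{\mathcal{C}}^{-1}>0$ is a positive linear map. Since both operations compose well with concavity, nothing beyond elementary manipulation is needed.

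First I would dispose of the boundary (vanishing) conditions. Because $k_{\mathcal{C}}\ge 1>0$, the set $[0,\,V_{\text{max}}/k_{\mathcal{C}}]$ is a genuine nonempty interval on which $G_{\mathcal{C}}$ is well defined. Then $G_{\mathcal{C}}(0)=k_{\mathcal{C}}^{-1}G(0)=0$, and $G_{\mathcal{C}}(V_{\text{max}}/k_{\mathcal{C}})=k_{\mathcal{C}}^{-1}G(V_{\text{max}})=0$, using the hypothesis that $G$ vanishes at $0$ and at $V_{\text{max}}$.

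Next I would establish concavity. Fix $x,y\in[0,\,V_{\text{max}}/k_{\mathcal{C}}]$ and $\lambda\in[0,1]$. Then $k_{\mathcal{C}}x,\,k_{\mathcal{C}}y\in[0,\,V_{\text{max}}]$ and $k_{\mathcal{C}}\big(\lambda x+(1-\lambda)y\big)=\lambda(k_{\mathcal{C}}x)+(1-\lambda)(k_{\mathcal{C}}y)$. Applying concavity of $G$ on its domain and then dividing through by the positive constant $k_{\mathcal{C}}$ (which leaves the inequality direction unchanged) yields $G_{\mathcal{C}}\big(\lambda x+(1-\lambda)y\big)\ge\lambda G_{\mathcal{C}}(x)+(1-\lambda)G_{\mathcal{C}}(y)$, i.e. $G_{\mathcal{C}}$ is concave. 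As a byproduct one may also note that $G_{\mathcal{C}}$ is continuous, being a composition of continuous maps, although this is not required by the statement.

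There is essentially no obstacle here; the only point deserving a moment's attention is confirming that $[0,\,V_{\text{max}}/k_{\mathcal{C}}]$ is the correct nonempty domain on which both sides of each identity and inequality make sense, which is immediate from $k_{\mathcal{C}}\ge 1$. Alternatively one could argue abstractly — precomposition with an affine map and postcomposition with a positive linear map each preserve concavity, hence so does their combination — but the self-contained computation above is just as short and keeps the domain bookkeeping explicit.
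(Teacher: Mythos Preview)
Your proof is correct and follows essentially the same approach as the paper: verify the zeros directly from the definition, then establish concavity by taking a convex combination, using the linearity of $x\mapsto k_{\mathcal{C}}x$ inside $G$ and dividing through by the positive constant $k_{\mathcal{C}}$. The paper's version is slightly terser (it omits the explicit domain bookkeeping and the continuity remark), but the argument is the same.
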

\begin{proof}
The zeros of $G_{\mathcal{C}}$ are easy to verify. To show concavity, we have that $\forall x_1, x_2\in [0,\, V_{\text{max}}/k_{\mathcal{C}}]$ and $\lambda\in[0,\,1]$,
\begin{align*}
G_{\mathcal{C}}\left(\lambda x_1 + (1-\lambda)x_2\right)={1\over k_{\mathcal{C}}} G(\lambda k_{\mathcal{C}}x_1 +(1-\lambda)k_{\mathcal{C}}x_2) &\geq {1\over k_{\mathcal{C}}}\big(\lambda G(k_{\mathcal{C}}x_1)+(1-\lambda)G(k_{\mathcal{C}}x_2)\big)
\\
&=\lambda G_{\mathcal{C}}(x_1)+(1-\lambda)G_{\mathcal{C}}(x_2)
\end{align*}
\end{proof}

\noindent Eqns \eqref{VCODEt} and \eqref{Gc} show that $V_{\mathcal{C}}(\cdot)$ satisfies the following ODE:
\begin{equation}\label{VCODE}
\dot V_{\mathcal{C}}=G_{\mathcal{C}}\left(V_{\mathcal{C}}(t)+\tau^0D(t)\dot V_{\mathcal{C}}(t)\right)\left(1+\tau^0 \dot D(t)\right)-G_{\mathcal{C}}\big(V_{\mathcal{C}}(t)\big)
\end{equation}
In other words, the network volume based on partial observation $V_\mathcal{C}$ satisfies the same ODE as the full network volume $V$, except with a different MFD function $G_{\mathcal{C}}(\cdot)$.

Next, let $V^*$ be the unique critical value that corresponds to the peak of the MFD function $G(\cdot)$, then we have the following lemma
\begin{lemma}\label{cvlemma}
The critical value $V^*_{\mathcal{C}}$ corresponding to the peak of the MFD function $G_{\mathcal{C}}(\cdot)$ satisfies $V^*_{\mathcal{C}}={1\over k_{\mathcal{C}}} V^*$.
\end{lemma}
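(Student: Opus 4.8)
The plan is to exploit the simple affine reparametrization built into the definition \eqref{Gc}, namely $G_{\mathcal{C}}(x)=\tfrac{1}{k_{\mathcal{C}}}G(k_{\mathcal{C}}x)$, which says that $G_{\mathcal{C}}$ is obtained from $G$ by scaling both the input and the output by the positive constant $k_{\mathcal{C}}$. Scaling the output by $1/k_{\mathcal{C}}$ cannot move the location of the maximum, so the only effect on the argmax comes from the horizontal rescaling $x\mapsto k_{\mathcal{C}}x$, which should send the maximizer $V^*$ of $G$ to the maximizer $V^*/k_{\mathcal{C}}$ of $G_{\mathcal{C}}$.

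Concretely, I would first recall that $V^*$ is the unique maximizer of $G$ on $[0,V_{\text{max}}]$, i.e. $G(y)\le G(V^*)$ for all $y\in[0,V_{\text{max}}]$ with equality only at $y=V^*$; the existence and uniqueness of such a peak is what is meant by ``the critical value,'' and $G_{\mathcal{C}}$ inherits concavity from the previous lemma so that its peak is likewise well defined. Then, for any $x$ in the domain $[0,V_{\text{max}}/k_{\mathcal{C}}]$ of $G_{\mathcal{C}}$, I would write $G_{\mathcal{C}}(x)=\tfrac{1}{k_{\mathcal{C}}}G(k_{\mathcal{C}}x)\le \tfrac{1}{k_{\mathcal{C}}}G(V^*)$, using that $k_{\mathcal{C}}x\in[0,V_{\text{max}}]$, with equality exactly when $k_{\mathcal{C}}x=V^*$, that is, at $x=V^*/k_{\mathcal{C}}$. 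This candidate lies in $[0,V_{\text{max}}/k_{\mathcal{C}}]$ since $0\le V^*\le V_{\text{max}}$ and $k_{\mathcal{C}}\ge 1$. Uniqueness transfers as well: the map $x\mapsto k_{\mathcal{C}}x$ is a bijection of $[0,V_{\text{max}}/k_{\mathcal{C}}]$ onto $[0,V_{\text{max}}]$, so because $V^*$ is the only point at which $G$ attains its maximum, $V^*/k_{\mathcal{C}}$ is the only point at which $G_{\mathcal{C}}$ attains its maximum. Hence $V^*_{\mathcal{C}}=V^*/k_{\mathcal{C}}$.

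As an alternative, if one prefers to characterize the peak through a first-order condition (assuming $G$ differentiable at $V^*$), I would differentiate \eqref{Gc} by the chain rule to obtain $G_{\mathcal{C}}'(x)=G'(k_{\mathcal{C}}x)$, which vanishes precisely when $k_{\mathcal{C}}x=V^*$, giving the same conclusion. I do not expect any genuine obstacle here; the only points worth a line of care are checking that the candidate maximizer $V^*/k_{\mathcal{C}}$ actually stays inside the shrunk domain $[0,V_{\text{max}}/k_{\mathcal{C}}]$ (immediate from $V^*\le V_{\text{max}}$) and invoking uniqueness of $V^*$ to conclude uniqueness of $V^*_{\mathcal{C}}$.
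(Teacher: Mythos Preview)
Your proposal is correct and follows essentially the same approach as the paper: both arguments use the definition $G_{\mathcal{C}}(x)=\tfrac{1}{k_{\mathcal{C}}}G(k_{\mathcal{C}}x)$ together with the bijective change of variables $x\leftrightarrow k_{\mathcal{C}}x$ to transfer the inequality $G(V)\le G(V^*)$ into $G_{\mathcal{C}}(x)\le G_{\mathcal{C}}(V^*/k_{\mathcal{C}})$, and then invoke concavity for uniqueness. Your write-up is slightly more careful in explicitly verifying that $V^*/k_{\mathcal{C}}$ lies in the domain and in noting the derivative alternative, but the substance is identical.
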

\begin{proof}
By definition:
$$
G(V^*) \geq G(V)\qquad \forall V\in[0,\,V_{\text{max}}]
$$
This is equivalent to
$$
{1\over k_{\mathcal{C}}} G\left(k_{\mathcal{C}} {1\over k_{\mathcal{C}}}V^*\right) \geq {1\over k_{\mathcal{C}}}G\left(k_{\mathcal{C}} {1\over k_{\mathcal{C}}}V\right)\qquad \forall V\in[0,\,V_{\text{max}}],
$$
that is,
$$
G_{\mathcal{C}}\left({1\over k_{\mathcal{C}}}V^*\right)\geq G_{\mathcal{C}}\left({1\over k_{\mathcal{C}}}V\right) \qquad \forall V\in[0,\,V_{\text{max}}]
$$
By changing the variables $V_{\mathcal{C}}\doteq {1\over k_{\mathcal{C}}}V$, $V^*_{\mathcal{C}}\doteq {1\over k_{\mathcal{C}}}V^*$, we have 
$$
G_{\mathcal{C}}(V^*_{\mathcal{C}})\geq G_{\mathcal{C}}(V_{\mathcal{C}})\qquad\forall V_{\mathcal{C}}\in[0,\,V_{\text{max}}/k_{\mathcal{C}}]
$$
This means that $V^*_{\mathcal{C}}={1\over k_{\mathcal{C}}} V^*$ is the unique global maximum of $G_{\mathcal{C}}(\cdot)$ over its domain, following concavity of $G_{\mathcal{C}}(\cdot)$. 
\end{proof}

An immediate consequence of {\bf (A1)} and Lemma \ref{cvlemma} is:
$$
{V(t)\over V^*}={k_{\mathcal{C}}V_{\mathcal{C}}(t)\over k_{\mathcal{C}}V_{\mathcal{C}}^*}={V_{\mathcal{C}}(t)\over V_{\mathcal{C}}^*},
$$
which leads us to the following observability-invariant property, where we define traffic observability as the (unknown) proportion of total traffic that is detected by the LPCs. 

\begin{proposition}\label{propip}{\bf (The observability-invariant property)} 
Let $V(t)$ and $V_{\mathcal{C}}(t)$ be the full network volume and partially observed volume (with camera set $\mathcal{C}$), respectively. $G(\cdot)$ is the MFD function corresponding to $V(t)$. Given that $V(t)$ and $V_{\mathcal{C}}(t)$satisfy {\bf (A1)}, then:
\begin{itemize}
\item[(1)] $V_{\mathcal{C}}(t)$ satisfies the ODE \eqref{VCODE} where $G_{\mathcal{C}}(\cdot)$ is given in \eqref{Gc}. 
\item[(2)] Their corresponding critical values satisfy $V^*=k_{\mathcal{C}}V_{\mathcal{C}}^*$.
\item[(3)]For any $t$, the ratio-to-critical-value $V_{\mathcal{C}}(t)/V_{\mathcal{C}}^*$ based on partial observation is equal to $V(t)/V^*$ with full observation, irrespective of the unknown parameter $k_{\mathcal{C}}$. 
\end{itemize}
\end{proposition}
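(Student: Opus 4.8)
The plan is to assemble Proposition~\ref{propip} directly from the pieces already established, since each of its three parts is essentially a restatement of an earlier result once Assumption~\textbf{(A1)} is in force. First I would invoke \textbf{(A1)} to fix the relation $V(t)=k_{\mathcal{C}}V_{\mathcal{C}}(t)$ for all $t$ with some unknown constant $k_{\mathcal{C}}\ge 1$, and note that $V(\cdot)$ satisfies the ODE \eqref{eqn5} by the derivation in Section~\ref{section:2.2}. For part~(1), I would substitute $V=k_{\mathcal{C}}V_{\mathcal{C}}$ into \eqref{eqn5} to obtain \eqref{VCODEt}, then divide through by $k_{\mathcal{C}}$ and recognize the combinations $\frac{1}{k_{\mathcal{C}}}G(k_{\mathcal{C}}x)$ as the transformed function $G_{\mathcal{C}}(x)$ defined in \eqref{Gc}; this yields \eqref{VCODE} verbatim. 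I would also cite the first (unnumbered) lemma to confirm that $G_{\mathcal{C}}(\cdot)$ is a legitimate MFD function, i.e.\ concave and vanishing at $0$ and $V_{\text{max}}/k_{\mathcal{C}}$, so that the transformed ODE is of the same type as the original.

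For part~(2), I would simply appeal to Lemma~\ref{cvlemma}, which already establishes $V^*_{\mathcal{C}}=\frac{1}{k_{\mathcal{C}}}V^*$, equivalently $V^*=k_{\mathcal{C}}V^*_{\mathcal{C}}$; the only thing worth re-emphasizing is that the critical values are \emph{unique} because $G$ (and hence $G_{\mathcal{C}}$) is strictly unimodal as a concave function vanishing at the two endpoints, so the ratios in part~(3) are well defined. For part~(3), I would combine \textbf{(A1)} with part~(2): for any $t$,
\begin{equation*}
\frac{V_{\mathcal{C}}(t)}{V^*_{\mathcal{C}}}=\frac{V(t)/k_{\mathcal{C}}}{V^*/k_{\mathcal{C}}}=\frac{V(t)}{V^*},
\end{equation*}
and the factor $k_{\mathcal{C}}$ cancels identically, so the ratio-to-critical-value is independent of the unknown observability level encoded in $k_{\mathcal{C}}$.

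Honestly, there is no real analytic obstacle here — the proposition is a packaging of Lemma~(concavity), Lemma~\ref{cvlemma}, and the ODE substitution \eqref{VCODEt}--\eqref{VCODE}, all of which are done above. The one subtlety I would be careful to flag is that the entire argument is conditional on \textbf{(A1)}: the proportionality $V(t)=k_{\mathcal{C}}V_{\mathcal{C}}(t)$ must hold \emph{pointwise in time with a single constant} $k_{\mathcal{C}}$, not merely on average or up to a time-varying factor. If $k_{\mathcal{C}}$ were allowed to depend on $t$, the substitution into \eqref{eqn5} would no longer produce a clean autonomous transformation of $G$, and neither part~(1) nor part~(3) would survive. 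Thus the "hard part" is not in the proof but in justifying \textbf{(A1)} itself, which is precisely why Section~\ref{secA1dis} is devoted to the spatial-distribution conditions on the LPCs that make the pointwise-constant ratio plausible; in the formal proof I would just state that the result follows immediately from \textbf{(A1)}, the concavity lemma, and Lemma~\ref{cvlemma}, and present the three-line cancellation above as the substance of part~(3).
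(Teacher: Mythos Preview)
Your proposal is correct and follows essentially the same approach as the paper: the paper likewise derives part~(1) by substituting $V=k_{\mathcal{C}}V_{\mathcal{C}}$ into \eqref{eqn5} to obtain \eqref{VCODEt}--\eqref{VCODE} (with the concavity lemma guaranteeing $G_{\mathcal{C}}$ is a valid MFD), cites Lemma~\ref{cvlemma} for part~(2), and gives exactly your three-line cancellation for part~(3). Your added remarks on the uniqueness of the critical value and on the necessity that $k_{\mathcal{C}}$ be time-independent are sound and in the spirit of the surrounding text.
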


The practical significance of Proposition \ref{propip} is that, even though the network traffic is only partially observable using prevailing sensing technology, and the proportion of those observed is unknown, one can calculate $V_{\mathcal{C}}(t)/V_{\mathcal{C}}^*$ as an approximation of the ground-truth $V(t)/V^*$, which is an important indicator of saturation level and network efficiency, by working with the observed traffic through ODE \eqref{VpODE}.

Nevertheless, the assumption {\bf (A1)} and Proposition \ref{propip} cannot be verified in reality, as the true volume $V(\cdot)$ is still unknown. Therefore, we propose the following `finite' version of {\bf (A1)}, such that it can be empirically verified (in Section \ref{subsecA1}).

\vspace{0.05 in}

\noindent {\bf Assumption (A1')} Let $\mathcal{C}_0$ and $\mathcal{C}_1$ be two sets of cameras and $\mathcal{C}_0\subset\mathcal{C}_1$, then there exists $k_{0,1}\geq 1$ such that $V_{\mathcal{C}_1}(t)=k_{0,1}V_{\mathcal{C}_0}(t)$ for all $t$.

\vspace{0.05 in}

The requirements for {\bf (A1')} to hold will be elaborated in Section \ref{secA1dis}. {\bf (A1')} is the `finite' version of {\bf (A1)} because, in case that $\mathcal{C}_1$ contains infinite number of cameras distributed ubiquitously in the network, $V_{\mathcal{C}_1}(\cdot)$ coincides with $V(\cdot)$ and {\bf (A1')} becomes {\bf (A1)}. As a result, Corollary \ref{corrip} immediately follows, which will be empirically validated in Section \ref{subsecvalcor}. 

\begin{corollary}\label{corrip}
Let $\mathcal{C}_0\subset\mathcal{C}_1$ be two sets of cameras that satisfy {\bf (A1')}, and $V_{\mathcal{C}_0}(t)$, $V_{\mathcal{C}_1}(t)$ are the corresponding partially observed volumes. $G_{\mathcal{C}_1}(\cdot)$ is the MFD function corresponding to $V_{\mathcal{C}_1}(t)$, then:
\begin{itemize}
\item[(1)] $V_{\mathcal{C}_0}(t)$ satisfies the ODE \eqref{VCODE} with the MFD function given by $G_{\mathcal{C}_0}(x)={1\over k_{0,1}}G_{\mathcal{C}_1}(k_{0,1}x)$.
\item[(2)] Their critical values satisfy $V_{\mathcal{C}_1}^*=k_{0,1}V_{\mathcal{C}_0}^*$.
\item[(3)] They yield the same ratio-to-critical-values: $V_{\mathcal{C}_1}(t)/V_{\mathcal{C}_1}^*=V_{\mathcal{C}_0}(t)/V_{\mathcal{C}_0}^*$.
\end{itemize}
\end{corollary}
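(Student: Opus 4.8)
The plan is to observe that Corollary \ref{corrip} is nothing more than Proposition \ref{propip} (together with the concavity lemma preceding Lemma \ref{cvlemma}, and Lemma \ref{cvlemma} itself) after a relabelling: $V_{\mathcal{C}_1}$ plays the role of the ``full'' network volume $V$, $V_{\mathcal{C}_0}$ plays the role of the partially observed volume $V_{\mathcal{C}}$, and the unknown multiplier $k_{\mathcal{C}}$ is replaced by $k_{0,1}$. The entire derivation of Section \ref{section:2.2} used only two ingredients: the pointwise algebraic relation between the two volumes, and the fact that the MFD function is continuous, concave, and vanishes at both endpoints of its domain. Both are available here — the relation $V_{\mathcal{C}_1}(t)=k_{0,1}V_{\mathcal{C}_0}(t)$ is exactly assumption \textbf{(A1$'$)}, and ``$G_{\mathcal{C}_1}$ is the MFD function corresponding to $V_{\mathcal{C}_1}$'' means that $V_{\mathcal{C}_1}$ solves \eqref{VCODE} (in the same approximate sense as \eqref{eqn5}, inherited from the Taylor step \eqref{eqnTaylor}) with a concave $G_{\mathcal{C}_1}$ vanishing at $0$ and at some $V_{\mathcal{C}_1,\text{max}}$ — so the argument should transfer without change.

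For part (1) I would substitute $V_{\mathcal{C}_1}(t)=k_{0,1}V_{\mathcal{C}_0}(t)$ into \eqref{VCODE} written for $V_{\mathcal{C}_1}$, mimicking the passage from \eqref{eqn5} to \eqref{VCODEt}, which yields
\[
k_{0,1}\dot V_{\mathcal{C}_0}=G_{\mathcal{C}_1}\!\big(k_{0,1}V_{\mathcal{C}_0}(t)+k_{0,1}\tau^0 D(t)\dot V_{\mathcal{C}_0}(t)\big)\big(1+\tau^0\dot D(t)\big)-G_{\mathcal{C}_1}\!\big(k_{0,1}V_{\mathcal{C}_0}(t)\big).
\]
Dividing by $k_{0,1}$ and setting $G_{\mathcal{C}_0}(x)=\tfrac{1}{k_{0,1}}G_{\mathcal{C}_1}(k_{0,1}x)$ on $[0,\,V_{\mathcal{C}_1,\text{max}}/k_{0,1}]$ — the analogue of \eqref{Gc} — rewrites the display as exactly \eqref{VCODE} with $G_{\mathcal{C}_0}$ in place of $G_{\mathcal{C}}$. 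Applying the concavity lemma stated just before Lemma \ref{cvlemma} (with $G\mapsto G_{\mathcal{C}_1}$, $k_{\mathcal{C}}\mapsto k_{0,1}$) confirms $G_{\mathcal{C}_0}$ is again concave and vanishes at $0$ and $V_{\mathcal{C}_1,\text{max}}/k_{0,1}$, hence is a legitimate MFD function; I would also note that $V_{\mathcal{C}_0}=V_{\mathcal{C}_1}/k_{0,1}$ does stay inside this rescaled domain.

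Part (2) is then Lemma \ref{cvlemma} under the same relabelling: $G_{\mathcal{C}_1}$ being concave has a unique maximiser $V_{\mathcal{C}_1}^*$, and the chain of equivalences in that lemma's proof identifies $\tfrac{1}{k_{0,1}}V_{\mathcal{C}_1}^*$ as the unique maximiser of $G_{\mathcal{C}_0}$, i.e. $V_{\mathcal{C}_0}^*=\tfrac{1}{k_{0,1}}V_{\mathcal{C}_1}^*$, equivalently $V_{\mathcal{C}_1}^*=k_{0,1}V_{\mathcal{C}_0}^*$. Part (3) is then immediate from \textbf{(A1$'$)} and part (2):
\[
\frac{V_{\mathcal{C}_1}(t)}{V_{\mathcal{C}_1}^*}=\frac{k_{0,1}V_{\mathcal{C}_0}(t)}{k_{0,1}V_{\mathcal{C}_0}^*}=\frac{V_{\mathcal{C}_0}(t)}{V_{\mathcal{C}_0}^*},
\]
the factor $k_{0,1}$ cancelling irrespective of its value.

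I do not expect a genuine obstacle here, since all the substantive work is already carried out in Proposition \ref{propip}; the only points that deserve explicit care are (i) making clear that the hypothesis ``$G_{\mathcal{C}_1}$ is an MFD function'' is precisely what supplies the concavity and endpoint-vanishing structure needed to re-use the concavity lemma and Lemma \ref{cvlemma} — this is a mild assumption in addition to \textbf{(A1$'$)} itself — and (ii) checking that the domain of $G_{\mathcal{C}_0}$ is the correctly rescaled interval so that ``critical value'' and ``peak'' remain well defined. It is worth one sentence emphasising the conceptual point: \textbf{(A1$'$)} replaces the unverifiable \textbf{(A1)} by a statement about two concrete nested camera sets, and Corollary \ref{corrip} is exactly the portion of Proposition \ref{propip} that survives this weakening, with no appeal to the unobservable true volume $V(\cdot)$.
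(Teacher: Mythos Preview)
Your proposal is correct and matches the paper's approach: the paper does not give a separate proof of Corollary \ref{corrip} but simply notes that it ``immediately follows'' from Proposition \ref{propip} once {\bf (A1$'$)} replaces {\bf (A1)}, i.e.\ exactly the relabelling $V\mapsto V_{\mathcal{C}_1}$, $V_{\mathcal{C}}\mapsto V_{\mathcal{C}_0}$, $k_{\mathcal{C}}\mapsto k_{0,1}$ you carry out. Your write-up is in fact more explicit than the paper's, and the two caveats you flag (concavity/endpoint structure of $G_{\mathcal{C}_1}$ and the rescaled domain) are the only points that need stating.
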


\vspace{0.1 in}

\begin{example}\label{excube}{\bf An example of cubic $G(\cdot)$.}
Following \cite{GD2008}, we assume a third-order polynomial form of the MFD function:
\begin{equation}
G(V)=aV^3 + bV^2 + cV\qquad V\in[0,\,V_{\text{max}}], 
\end{equation}

\noindent where $V=V(\cdot)$ is the total network volume. Per assumption {\bf (A1)}, let $V_{\mathcal{C}}(\cdot)$ be the partially observed volume such that $V(t)=k_{\mathcal{C}}V_{\mathcal{C}}(t),~\forall t$.  By introducing the following simplified notations,
\begin{align}
\label{alphaeqn}
\alpha &\doteq V(t) + \tau^0 D(t) \dot V(t) 
\\
\label{betaeqn}
\beta & \doteq 1 + {\tau ^0}\dot D(t) 
\end{align}

\noindent ODE \eqref{eqn5} is rewritten as
\begin{align}
	\nonumber
	\dot V&=\beta G(\alpha) - G(V)
	\\
	\nonumber
	&=\beta (a \alpha^3+b \alpha^2+c \alpha)-(aV^3 + bV^2 + cV)
	\\
	\label{eqnfitting}
	&=a(\beta \alpha^3 - V^3) + b(\beta \alpha^2 - V^2) + c(\beta \alpha - V)
\end{align}

\noindent Referring to the notations \eqref{alphaeqn}-\eqref{betaeqn},  we define:
\begin{align*}
&\alpha_{\mathcal{C}}\doteq V_{\mathcal{C}}(t)+\tau^0D(t)\dot V_{\mathcal{C}}(t)={1\over k_{\mathcal{C}}}\left(V(t)+\tau^0D(t)\dot V(t)\right)={1\over k_{\mathcal{C}}}\alpha
\\
&\beta_{\mathcal{C}}\doteq 1+\tau^0\dot D(t)=\beta
\end{align*}

\noindent According to \eqref{eqnfitting}:
\begin{align}
\nonumber
k_{\mathcal{C}}\dot V_{\mathcal{C}}=\dot{V}=&a(\beta \alpha^3 - V^3) + b(\beta \alpha^2 - V^2) + c(\beta \alpha - V)
\\
\nonumber
=&  ak_{\mathcal{C}}^3(\beta_{\mathcal{C}} \alpha_{\mathcal{C}}^3-V_{\mathcal{C}}^3) +  b k_{\mathcal{C}}^2 (\beta_{\mathcal{C}} \alpha_{\mathcal{C}}^2-V_{\mathcal{C}}^2)+ c k_{\mathcal{C}}(\beta_{\mathcal{C}}\alpha_{\mathcal{C}} -V_{\mathcal{C}})
\\
\label{eqnVCODE}
\Longrightarrow \quad \dot V_{\mathcal{C}}=& a k_{\mathcal{C}}^2(\beta_{\mathcal{C}}\alpha_{\mathcal{C}}^3-V_{\mathcal{C}}^3)+b k_{\mathcal{C}}(\beta_{\mathcal{C}}\alpha_{\mathcal{C}}^2-V_{\mathcal{C}}^2)+ c(\beta_{\mathcal{C}}\alpha_{\mathcal{C}} -V_{\mathcal{C}})
\end{align}
\noindent In other words, the partial observation $V_{\mathcal{C}}(\cdot)$ satisfies the following ODE:
\begin{equation}\label{VpODE}
\dot V_{\mathcal{C}}(t)=G_{\mathcal{C}}\left( V_{\mathcal{C}}(t) + \tau^0 D(t) \dot V_{\mathcal{C}}(t) \right) \left(1+\tau^0 \dot D(t)\right)- G_{\mathcal{C}}\big(V_{\mathcal{C}}(t)\big)
\end{equation}
\noindent where 
$$
G_{\mathcal{C}}(x)= ak_{\mathcal{C}}^2x^3 + bk_{\mathcal{C}}x^2 + c x={1\over k_{\mathcal{C}}} G(k_{\mathcal{C}} x) \qquad x\in [0,\, V_{\text{max}}/k_{\mathcal{C}}]
$$

Next, we denote by $V^*$ and $V^*_{\mathcal{C}}$ the critical values that correspond to the peak of the cubic MFD functions $G(\cdot)$ and $G_{\mathcal{C}}(\cdot)$, respectively. To find $V^*$, we take the derivative and set it to zero:
$$
G'(V)=3aV^2+2bV+c=0~\Longrightarrow~V^*={-b\pm \sqrt{b^2-3ac}\over 3a}
$$
Similarly, we have 
$$
V_{\mathcal{C}}^*={-bk_{\mathcal{C}} \pm \sqrt{k_{\mathcal{C}}^2 b^2-3k_{\mathcal{C}}^2ac}\over 3ak_{\mathcal{C}}^2}={1\over k_{\mathcal{C}}}\cdot {-b\pm \sqrt{b^2-3ac}\over 3a}={1\over k_{\mathcal{C}}} V^*
$$
This leads to the invariance property
$$
{V(t)\over V^*}={k_{\mathcal{C}}V_{\mathcal{C}}(t)\over k_{\mathcal{C}} V^*_{\mathcal{C}}}={V_{\mathcal{C}}(t)\over V^*_{\mathcal{C}}}
$$

\end{example}

\subsection{Discussion of assumptions {\bf (A1)} and {\bf (A1')}}\label{secA1dis}

This section derives conditions for {\bf (A1)} or {\bf (A1')} to hold, and compares them with those for loop detectors in the literature. The following reasoning provides the underlying rationale for the validity of {\bf (A1)} and {\bf (A1')}. 

\begin{enumerate}
\item We start with a single road link $l$ covered by one or several traffic cameras. The resulting time-varying volume is denoted $V_l(t)$. We perform the following normalization over the 24-hour period $\mathcal{T}$:
$$
\tilde V_l(t)\doteq {1\over \int_{\mathcal{T}}V_l(s)ds} V_l(t)\qquad \forall t\in\mathcal{T},
$$
\noindent essentially making $\tilde V_l(\cdot)$ a  distribution over the temporal domain $\mathcal{T}$, with $\int_{\mathcal{T}}\tilde V_l(s)ds=1$. 

\item We consider a set of such temporal distributions $\{\tilde V_l(\cdot):\, l\in L\}$ obtained from a number of cameras scattered in the network. While each distribution $\tilde V_l(\cdot)$ is unique, most of them have similar shapes, characterized by morning or afternoon peaks, with low traffic during the night etc.; see Figure \ref{fig9roads} for the temporal distributions of several road segments selected from the case study. 

\begin{figure}[H]
\centering
\includegraphics[width=\textwidth]{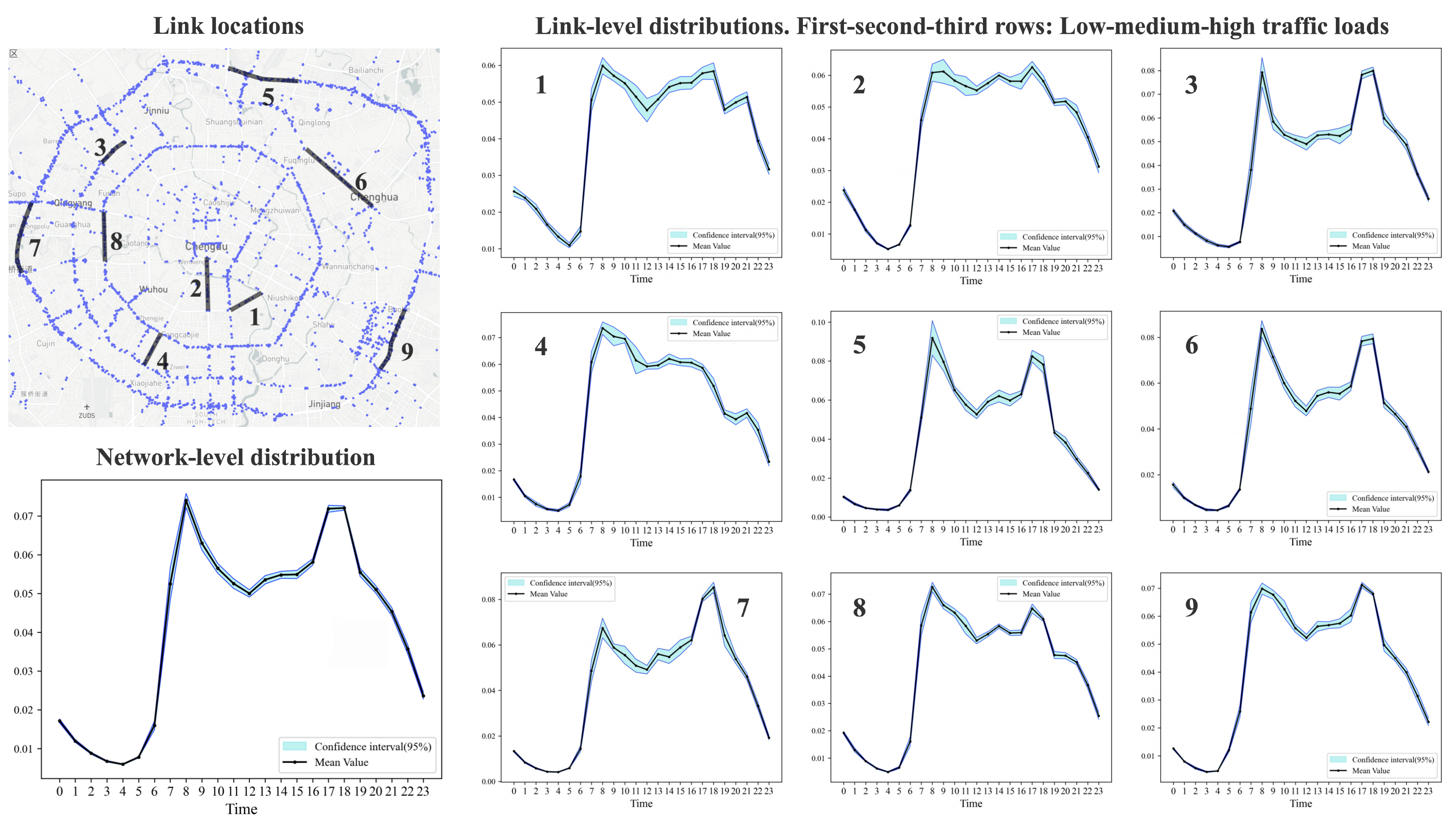}
\caption{Temporal distributions of traffic over 24-hr horizon.}
\label{fig9roads}
\end{figure}

\item At the network level with a camera set $\mathcal{C}$, we obtain the network volume $V_{\mathcal{C}}(\cdot)$ and the corresponding distribution $\tilde V_{\mathcal{C}}(\cdot)$, which is essentially a weighted average of $\{\tilde V_l(\cdot):\, l\in L_{\mathcal{C}}\}$, where $L_{\mathcal{C}}$ is the set of links covered by the cameras in $\mathcal{C}$. The weights are related to several factors including the links' traffic loads and topological configurations.

\item Let $V(\cdot)$ be the total network volume and $\tilde V(\cdot)$ the corresponding temporal distribution. If we denote by $L_{\text{all}}$ the set of all links in the network, then $\tilde V(\cdot)$ and $\tilde V_{\mathcal{C}}(\cdot)$ can be respectively interpreted as weighted average and weighted sample average of $\{\tilde V_{l}(\cdot):\,l\in L_{\text{all}}\}$. Therefore, provided that $L_{\mathcal{C}}$ is an adequate sample set\footnote{An adequate sample set should be representative of the entire population in terms of geographic locations and road types (arterial, express way), and has a size of at least 30.} of $L_{\text{all}}$, we have $\tilde V(\cdot) \approx \tilde V_{\mathcal{C}}(\cdot)$, which is equivalent to assumption {\bf (A1)}.

\item Let $\mathcal{C}_0$ and $\mathcal{C}_1$ be two adequate sample sets of $L_{\text{all}}$ and $\mathcal{C}_0\subset \mathcal{C}_1$. Then we have $\tilde V_{\mathcal{C}_0}(\cdot)\approx \tilde V_{\mathcal{C}_1}(\cdot)$, from which {\bf (A1')} follows.
\end{enumerate}

Based on the above deduction, we may derive conditions for {\bf (A1)} and {\bf (A1')} to hold. In addition to $L_{\mathcal{C}}$ being an adequate sample set of all the network links, we further notice the fact that if a link $l_1\notin L_{\mathcal{C}}$ is adjacent to other links in $L_{\mathcal{C}}$, then their temporal distributions are likely to be similar because their flows are highly correlated. Moreover, links with high traffic loads are associated with higher weights in the sample average, and therefore it helps when the camera set $\mathcal{C}$ cover high-volume links. 

In summary, the conditions for {\bf (A1)} or {\bf (A1')} to hold are: {\it The road links either covered by the camera set $\mathcal{C}$, or adjacent to those covered, are evenly distributed in the network, and contain as many high-volume links as possible}. 

\begin{figure}[H]
\centering
\includegraphics[width=.8\textwidth]{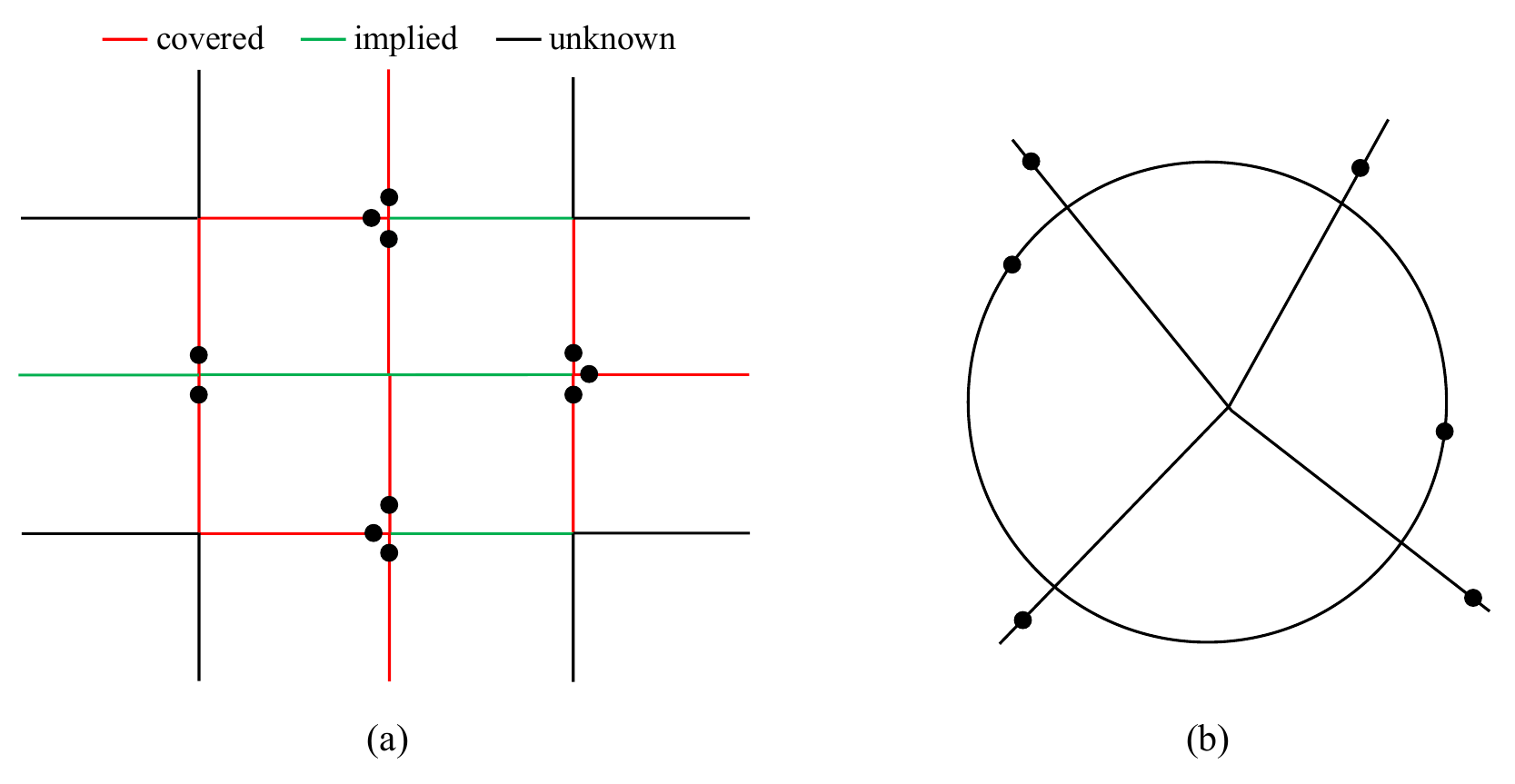}
\caption{(a): A network with cameras where links covered or implied are evenly distributed in space. (b): Even though the four links within the circle have no cameras,  cars using these links will eventually be all captured.}
\label{figA1}
\end{figure}

A loop detector measures traffic density at various parts of a road link, depending on its specific location. That means under most circumstances, it cannot accurately estimate the average density of a link, and could introduce significant bias. For example, detectors closer to the exit of a link usually produce very high density measurements, which are not representative of the overall density of the link. Therefore, average network density estimated from a number of loop detectors  may be susceptible to significant errors, unless they are evenly distributed in the network -- an assumption often made in relevant MFD studies. In contrast, in our work the cameras do not need to be evenly distributed\footnote{A license plate camera (LPC) can count the number of cars passing by within a time interval, which is much less sensitive to its specific location along the link, unlike loop detectors.}, as long as those links either covered, or implied via adjacency, are. See Figure \ref{figA1}(a) for an illustration. In addition, as $V_{\mathcal{C}}(t)$ is calculated by counting the number of distinct cars captured by all the cameras within certain period (e.g. 1 hr). The absence of cameras in one part of the network can be made up somewhere else, which has limited impact on $V_{\mathcal{C}}(t)$; see Figure \ref{figA1}(b) for an example.

\section{Empirical study}\label{secES}

\subsection{Numerical fitting of the MFD-VD with partial observation}\label{subsecnumfitting}

By virtue of the invariance property (Proposition \ref{propip}), the network volume $V_{\mathcal{C}}(\cdot)$ based on partial observation satisfies the same ODE as the full network volume $V(\cdot)$. Therefore, without causing any confusion, the subscript $\mathcal{C}$, referring to traffic volume observed from the cameras, is dropped in this subsection.

We start with a uniform time discretization $t_1, t_2, \ldots, t_s,\dots$ with increment $\Delta s$. The discrete network volume $V(t_s)$ is defined as the total number of distinct cars captured by the LPCs within $[t_s,\,t_{s+1})$. The discrete network congestion index $D(t_s)$ is defined by discretizing $\Phi\big(t,\,t+\Delta t\big)$; see \eqref{Phi}.

As in Example \ref{excube}, we assume a cubic form for the MFD function $G(x)=ax^3+bx^2+cx$, and recall from \eqref{eqnfitting} the ODE with simplified notations:
\begin{equation}\label{ODEforfit}
	\dot V=a(\beta \alpha^3 - V^3) + b(\beta \alpha^2 - V^2) + c(\beta \alpha - V)
\end{equation}
\noindent In the above, the left hand side $\dot V={d\over dt} V(t)$ can be estimated from empirical data using the centered differencing formula:
$$
\dot V(t_s) \approx {V(t_{s+1})-V(t_{s-1})\over 2 \Delta s}
$$
On the right hand side, $\beta$ can be similarly estimated by numerically differentiating $D(t)$:
$$
\beta = 1+\tau^0\dot D(t) \approx 1+ {D(t_{s+1})-D(t_{s-1})\over 2\Delta s}\tau^0
$$

We note from \eqref{ODEforfit} that the fitting is linear in $a$, $b$ and $c$, but  nonlinear in $\tau^0$. In fact, this formula contains $\tau^0$ to the 4-th power. To retain tractability of linear regression and investigate the impact of $\tau_0$ on the goodness-of-fitting,  we conduct an enumeration of $\tau^0$ and perform multi-variate linear regression for $a, b, c$ with given $\tau^0$. This allows us to easily find the four-tuple $(a,\,b,\,c,\,\tau^0)$ that best fit the empirical data. 

Regarding possible values of $\tau^0$, we note that since the travel purposes and trip characteristics change over the course of a day, $\tau^0$ may also vary with time. This is also reflected in \eqref{eqndass}, where $\tau^0$ is defined to be the trip-averaged free-flow time, which implicitly depends on origin-destination demands and trip volumes. Consequently, we consider the following four periods in Table \ref{tabperiods} and assume constant free-flow trip time $\tau^0_i$ in each period $i=1,2,3,4$; that is, $\tau^0=(\tau^0_1, \tau^0_2, \tau^0_3, \tau^0_4)$. 
\begin{table}[h]
\centering
\caption{The four time intervals of a day used to describe free-flow trip time $\tau^0$}
\label{tabperiods}
\begin{tabular}{c|c|c|c|c}
\hline
No. & 1 & 2 & 3 & 4
\\\hline
Time interval & 7:00-9:00 & 10:00-15:00 & 16:00-19:00 & 20:00-6:00
\\\hline
Description & Morning peak & Off-peak (day) & Afternoon peak & Off-peak (night)
\\\hline
\end{tabular}
\end{table}

In this section, we use empirical data from 33 workdays between 1 Jan and 24 Feb to fit the model. The total number of license plate cameras that are consistently functional during these 33 days is 1220, which are used to obtain the traffic volume $V(s),\, s=1,\ldots, 24$ with $\Delta s=1$ hr (tests on smaller time steps $\Delta s$ will be elaborated in Section \ref{subsectimegra}).

Here, two different datasets are considered for the multilinear regression \eqref{ODEforfit}:
\begin{itemize}
\item[(1)] Hourly data from 33 days: $\Omega_1 \doteq \big\{V^d(s),\, D^d(s): d=1,\ldots, 33;\, s=1,\ldots, 24\big\}$;
\item[(2)] 33-day averaged hourly data: $\Omega_2 \doteq \big\{\bar V(s),\, \bar D(s): s=1,\ldots, 24 \big\}$ where 
$$
\bar V(s)={1\over 33}\sum_{d=1}^{33} V^d(s),\quad \bar D(s)={1\over 33}\sum_{d=1}^{33} D^d(s)
$$
\end{itemize}
\noindent The second dataset $\Omega_2$ reduces daily variations by aggregating the hourly data over multiple days.

We consider the following goodness-of-fit indicators. Let $\{\hat y_i:  i=1,\ldots, n\}$ be the estimated values and $\{y_i: i=1,\ldots, n\}$ be the actual values:
\begin{itemize}
\item Standard error, also known as root mean square error (RMSE)
$$
\text{StdErr}=\sqrt{{\sum_{i=1}^n(\hat y_i - y_i)^2\over n}}
$$
\item R-squared
$$
R^2=1- {\sum_{i=1}^n (\hat y_i - y_i)^2 \over \sum_{i=1}^n (y_i - \bar y)^2}
$$
\noindent where $\bar y$ is the mean of $\{y_i: i=1,\ldots, n\}$.
\item Symmetric Mean Absolute Percentage Error
$$
\text{SMAPE}={1\over n}\sum_{i=1}^n { 2|\hat y_i -y_i| \over |\hat y_i|+|y_i|}\times 100\%
$$
\end{itemize}

By enumerating all components $\tau^0_i$ ($i=1,2,3,4$) ranging from 0.1 to 1.0 (hr) with an increment of 0.1 hr, we find the vector that yields the minimum standard error in both $\Omega_1$ and $\Omega_2$, which is $\tau^{0,*}=(0.4, 0.4,0.3,0.3)$ (hr). Table \ref{tabKPIs} compares various fitting performance measures of a few choices of $\tau^0$. It can be seen that $(0.4, 0.4,0.3,0.3)$ also performs well in terms of R$^2$ and SMAPE. 

%\begin{remark}
%The network congestion index and free-flow time appear in the form of a product $\tau^0 D(t)$ or $\tau^0 \dot D(t)$ in the ODE \eqref{VCODE}, which means the fitted MFD depends only on their product, not individual values. 
%\end{remark}

\begin{table}[h!]
\centering
\caption{Goodness-of-fit performance measures for a few choices of $\tau^0$.}
\label{tabKPIs}
\begin{tabular}{c| ccc| ccc}
\hline
\multirow{2}{*}{$\tau^0$ (hrs)}  & \multicolumn{3}{c|}{Original data $\Omega_1$} & \multicolumn{3}{c}{Daily averaged data $\Omega_2$} 
\\
\cline{2-7}
  &  StdErr & R$^2$ & SMAPE & StdErr & R$^2$ & SMAPE 
\\
\hline
${\bf (0.4,0.4,0.3,0.3)}$  & 2968 & 0.916 & 16.06\% & 1316 & 0.962 & 11.87\%
 \\
 \hline
 $(0.4,0.4,0.4,0.4)$  & 4553 & 0.871 & 21.02\% & 3400 & 0.902 & 14.71\%
 \\
 \hline
  $(0.3,0.3,0.3,0.3)$  & 4397 & 0.876 & 20.81\% & 3217 & 0.908 & 14.47\%
 \\
 \hline
 $(0.5,0.5,0.4,0.4)$  & 3041 & 0.914 & 16.06\% & 1335 & 0.962 & 11.22\%
 \\
 \hline
 $(0.5,0.6,0.4,0.5)$  & 3694 & 0.896 & 16.57\% & 2121 & 0.939 & 11.86\%
 \\
 \hline
 $(0.3,0.3,0.2,0.2)$  & 3447 & 0.903 & 18.88\% & 2191 & 0.937 & 16.60\%
 \\\hline
\end{tabular}
\end{table}

In Figure \ref{figfitting} we visualize the fitting performances on $\Omega_1$ (left) and $\Omega_2$ (right). From this figure we see that the daily variation of $\dot V(\cdot)$ is relatively significant during 6:00-8:00, 9:00-10:00, 16:00-17:00, and 19:00-20:00. This is because traffic volume undergoes considerable temporal variations during these periods (see Figure \ref{figDV}), causing large variations in the derivative $\dot V(\cdot)$. Another observation is that the proposed model fits well the empirical data, on both disaggregated ($\Omega_1$) and aggregated ($\Omega_2$) levels.  

\begin{figure}[h!]
\centering
\includegraphics[width=\textwidth]{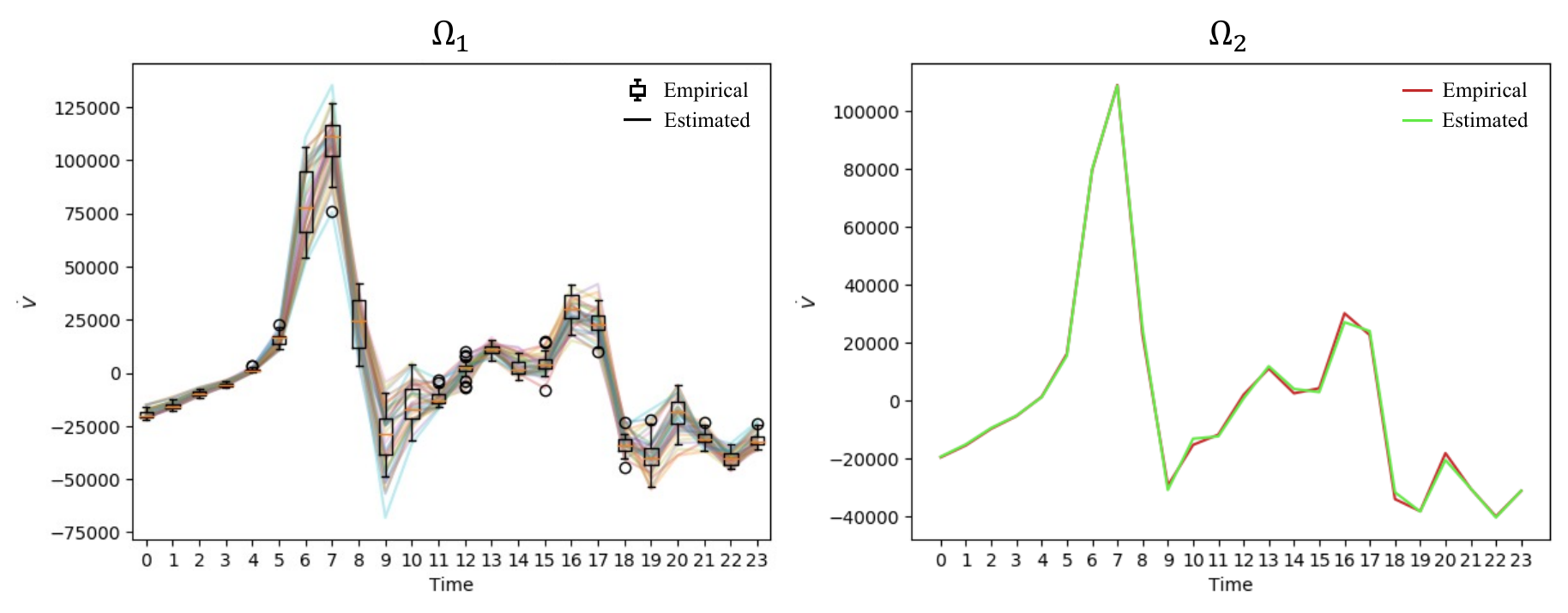}
\caption{Visualization of the fitting results with $\tau^{0,*}=(0.4,0.4,0.3,0.3)$. Left: fitting results on $\Omega_1$ where the box plots represent empirical data and the solid lines are estimated by our method, one for each day. Right: fitting results on $\Omega_2$ where both curves are averaged over 33 days.}
\label{figfitting}
\end{figure}

Figure \ref{figMFDs} shows the resulting MFD curves obtained from fitting the datasets $\Omega_1$ and $\Omega_2$, with the critical volume being $V^*=407715$ and $447534$, respectively. From Figure \ref{figlongts}, which shows hourly network volume calculated using the same 1220 cameras, we see that the highest hourly volume is around 300k, which means that traffic remained in the left branch of the MFD. 

\begin{figure}[h!]
\centering
\includegraphics[width=\textwidth]{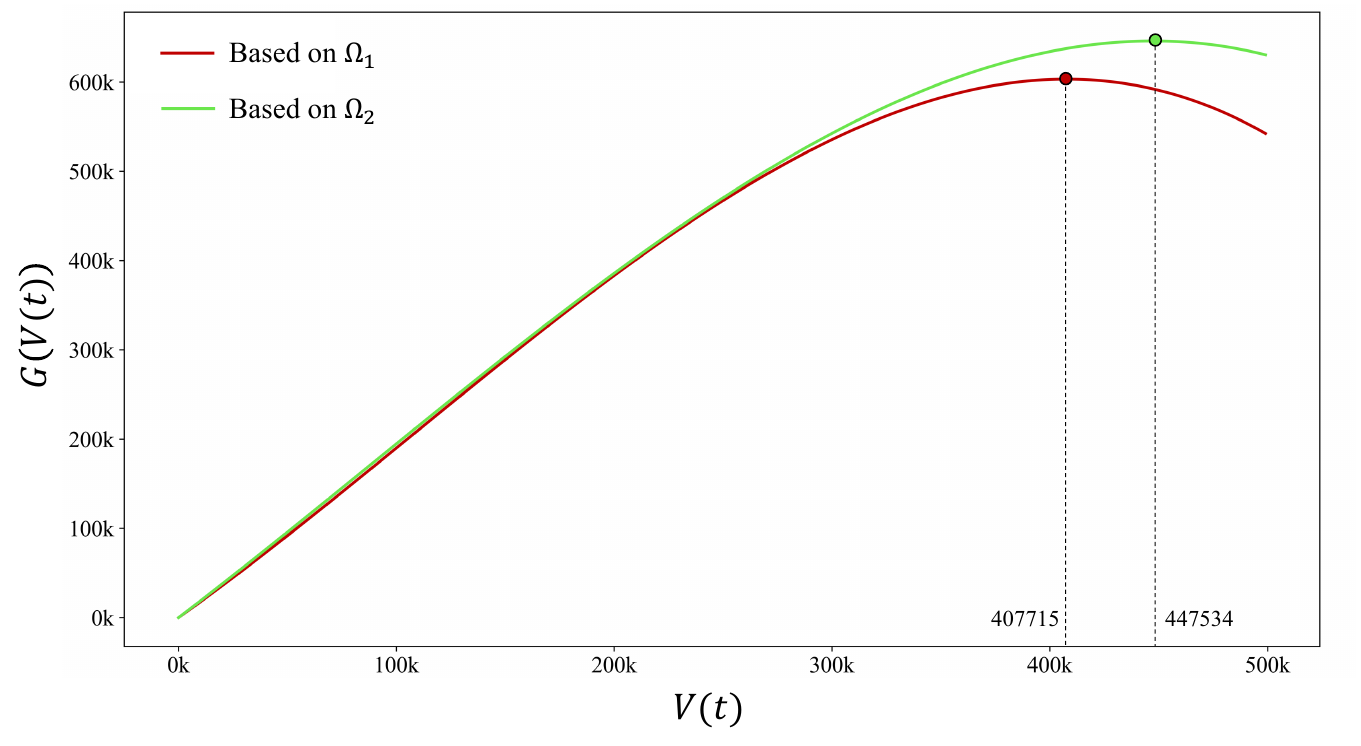}
\caption{MFD curves fitted based on $\Omega_1$ (33 days of hourly data) and $\Omega_2$ (daily averaged data).}
\label{figMFDs}
\end{figure}

\subsection{Observability-invariant property of the MFD-VD model}
	A hallmark of the MFD-VD model is that it defines the network accumulation as the volume of traffic captured by license plate cameras within certain time interval. While this provides a technologically viable way to monitor the network state as traffic cameras are densely available in many cities, they can only capture an unknown portion of the total traffic present in the network. This section uses empirical data to validate the observability-invariant property of the MFD-VD model.

\subsubsection{Empirical validation of Assumption (A1')}\label{subsecA1}

We use empirical data to validate assumption {\bf (A1')} as well as Corollary \ref{corrip}. To do this, we need to create a hierarchy of traffic observability by generating different camera sets $\mathcal{C}_i$. We make use of the fact that the set of license plate cameras in the study area varies on a daily basis, caused by regular maintenance activities and/or random data transmission failures. The following procedure is followed to achieve this goal. 
\begin{enumerate}
\item Select an arbitrary set of workdays, denoted $\mathcal{D}_0$, and determine the set of cameras $\mathcal{C}_0$ that are fully functional during $\mathcal{D}_0$.

\item Choose a subset of workdays $\mathcal{D}_1\subset \mathcal{D}_0$, and determine the set of cameras $\mathcal{C}_1$ that are fully functional during $\mathcal{D}_1$; clearly $\mathcal{C}_0 \subset \mathcal{C}_1$.

\item Repeat step 2 until a sequence of sets $\mathcal{D}_n\subset \ldots \subset \mathcal{D}_1\subset \mathcal{D}_0$ are found with $\mathcal{C}_0\subset \mathcal{C}_1\subset\ldots\subset \mathcal{C}_n$ for some $n\geq 1$.

\item Output the set $\mathcal{D}_n$ with a range of traffic observabilities characterized by $\mathcal{C}_0,\ldots, \mathcal{C}_n$. 
\end{enumerate}

We take $\mathcal{D}_0$ to be the 33 workdays between 1 Jan and 24 Feb, with $|\mathcal{C}_0|=1220$. Then, we let $|\mathcal{D}_1|=19$ with $|\mathcal{C}_1|=1760$ and $|\mathcal{D}_2|=8$ with $|\mathcal{C}_2|=2161$. This gives us a test period spanning 8 workdays ($\mathcal{D}_2$), with three levels of observability: $|\mathcal{C}_0|=1220$, $|\mathcal{C}_1|=1760$, and $|\mathcal{C}_2|=2161$. The spatial distributions of $\mathcal{C}_i$'s are shown in Figure \ref{figcamera3p}. Obviously, there are other ways to sample the cameras to form various subsets. The procedure described above is chosen because (1) it is based on clear instructions and easy to follow; and (2) it ensures that the selected cameras are always functional during the study period, providing valid data for processing.

\begin{figure}[h!]
\centering
\includegraphics[width=\textwidth]{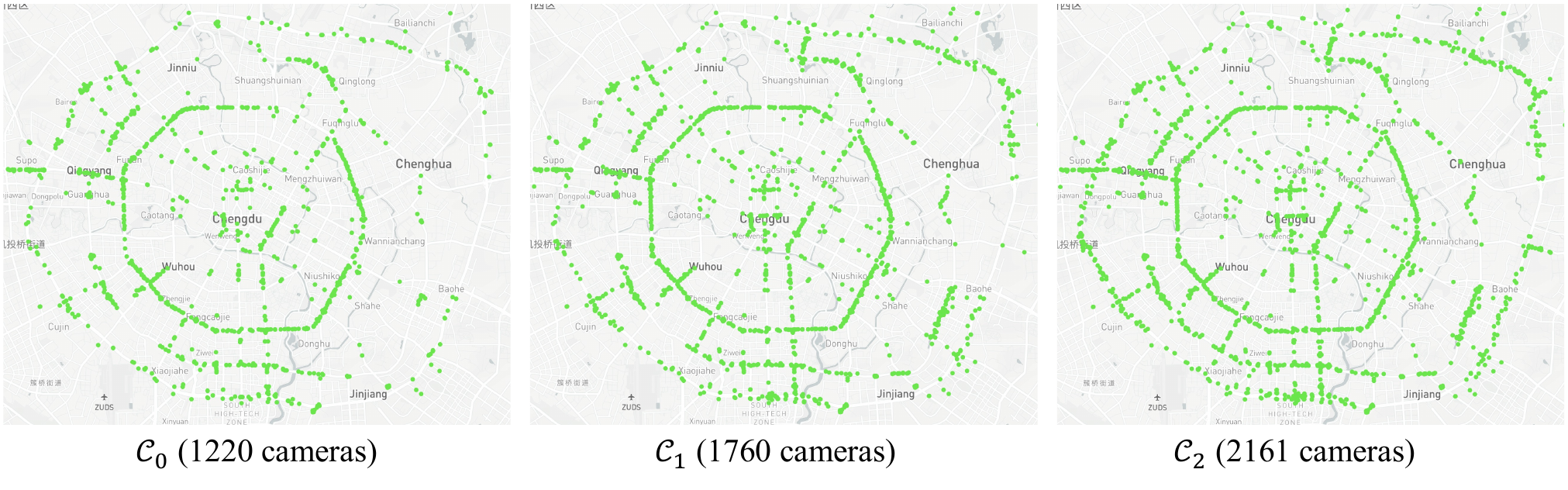}
\caption{Spatial distribution of fully functional cameras in $\mathcal{C}_i$, $i=0,1,2$.}
\label{figcamera3p}
\end{figure}

 The observed hourly traffic volumes $V_{\mathcal{C}_0}(\cdot)$, $V_{\mathcal{C}_1}(\cdot)$ and $V_{\mathcal{C}_2}(\cdot)$, corresponding to the camera sets $\mathcal{C}_0$, $\mathcal{C}_1$ and $\mathcal{C}_2$, respectively, are shown in the first row of Figure \ref{figalljan}. It can be seen that, other than the different magnitudes, the three curves are very similar in shape. To further verify this, subfigure (d) and (e) respectively show the boxplot of the point-wise ratio $V_{\mathcal{C}_1}(\cdot)/V_{\mathcal{C}_0}(\cdot)$, with a mean of $k_{0,1}=1.150$, and  $V_{\mathcal{C}_2}(\cdot)/V_{\mathcal{C}_0}(\cdot)$, with a mean of $k_{0,2}=1.266$. From these boxplots, we see that the hourly volume ratios center around their means with relatively small variations, suggesting the numerical validity of assumption {\bf (A1')}. Finally, subfigure (f) plots $V_{\mathcal{C}_0}(\cdot)$ together with rescaled $V_{\mathcal{C}_1}(\cdot)/k_{0,1}$ and $V_{\mathcal{C}_2}(\cdot)/k_{0,2}$; the three curves overlap almost entirely, suggesting that $V_{\mathcal{C}_1}(\cdot)$ and $V_{\mathcal{C}_2}(\cdot)$ can indeed be approximated as a point-wise multiple of $V_{\mathcal{C}_0}(\cdot)$.

\begin{figure}[h!]
\centering
\includegraphics[width=\textwidth]{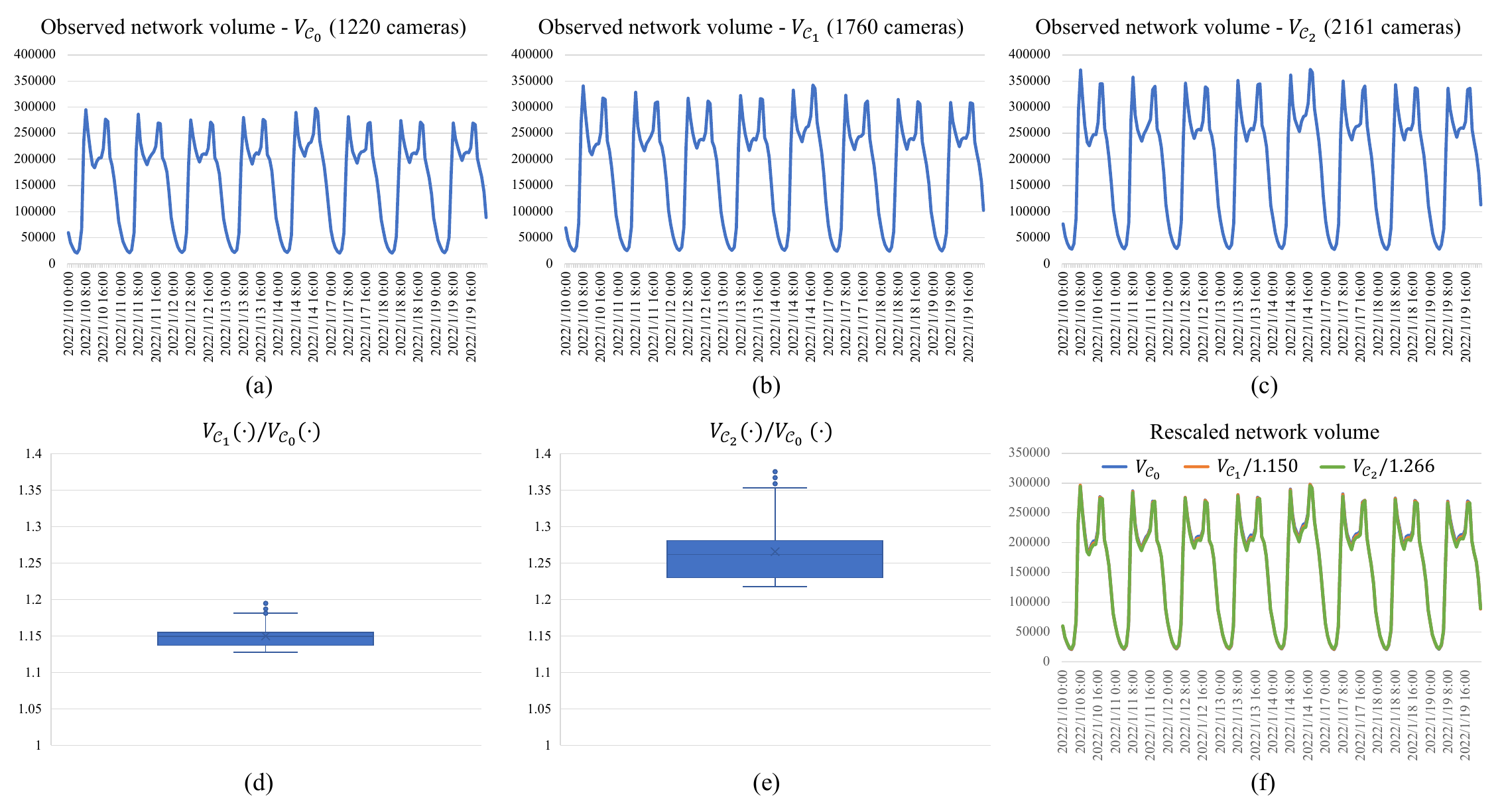}
\caption{First row: hourly network volumes during period $\mathcal{D}_2$, observed based on $\mathcal{C}_0$, $\mathcal{C}_1$ and $\mathcal{C}_2$. (d) \& (e): Box plots of the point-wise ratios $V_{\mathcal{C}_1}(\cdot)/V_{\mathcal{C}_0}(\cdot)$ \& $V_{\mathcal{C}_2}(\cdot)/V_{\mathcal{C}_0}(\cdot)$, respectively. (f): Rescaled volumes $V_{\mathcal{C}_0}(\cdot)$, $V_{\mathcal{C}_1}(\cdot)/k_{0,1}$, $V_{\mathcal{C}_2}(\cdot)/k_{0,2}$.}
\label{figalljan}
\end{figure}

\begin{remark}
The rational behind the validity of {\bf (A1')} is that the increment of observability (e.g. from extra cameras) captures extra traffic, quantified by the multiplicative factor $k_{0,1}$ or $k_{0,2}$, which is also consistent across all time intervals and multiple days, as seen in the second row of Figure \ref{figalljan}. As we derive at the end of Section \ref{secA1dis}, a condition for this to hold is that the three incremental camera sets correspond to adequate samples of the network links, and contain sufficient number of high-volume links. As shown in Figure \ref{figuniform}, links that are covered by the cameras,  or adjacent to those covered, constitute sufficient samples of all the network links, and contain the majority of 2nd and 3rd Ring Roads as well as main arterials, which have high traffic loads.
\end{remark}
	
\begin{figure}[H]
\centering
\includegraphics[width=\textwidth]{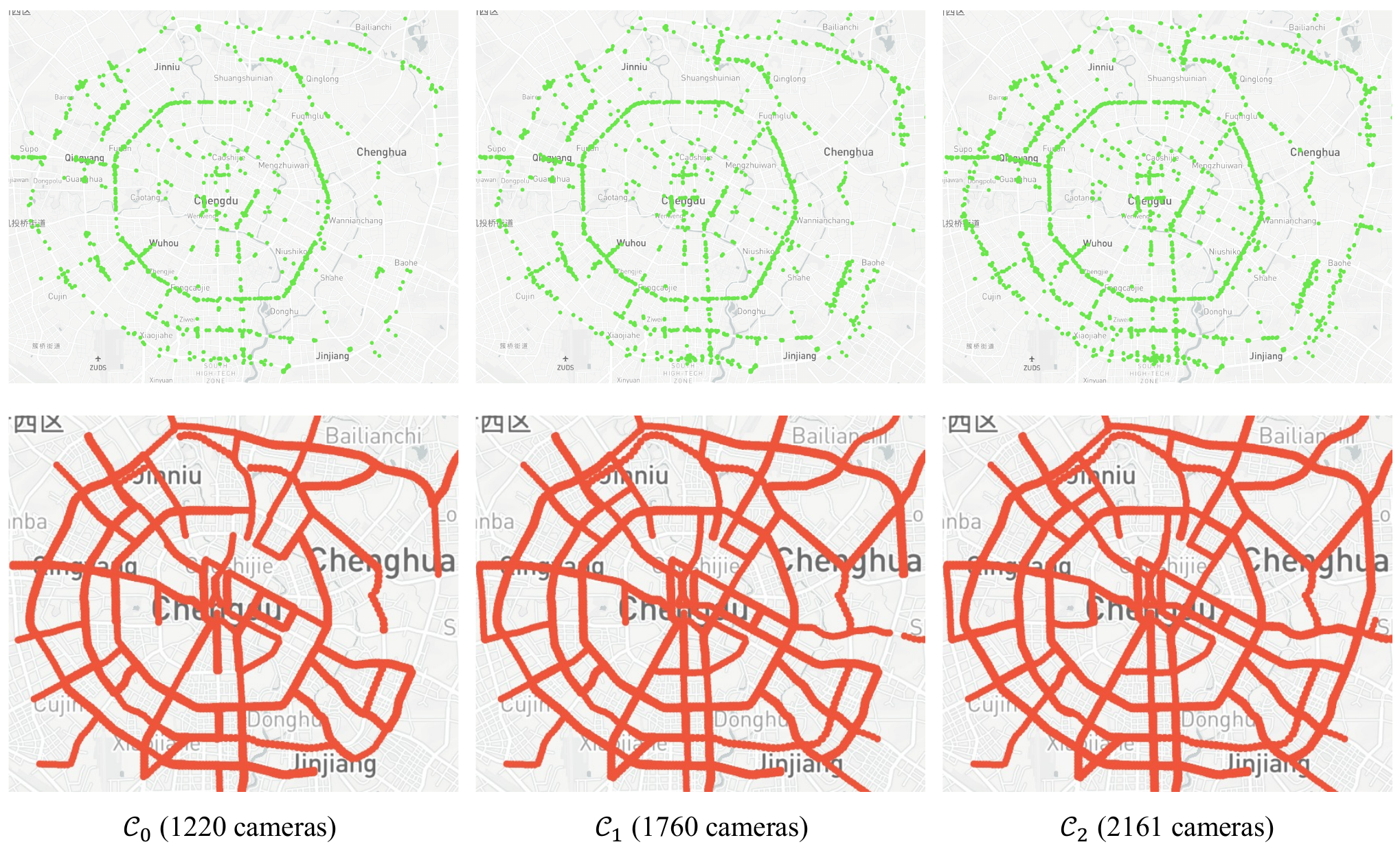}
\caption{Illustration of conditions for {\bf (A1)} and {\bf (A1')} to hold. First row: The three camera sets. Second row: Corresponding links that are either covered by the cameras or adjacent to those covered.}
\label{figuniform}
\end{figure}

\subsubsection{Observability-invariant ratio-to-critical-value}\label{subsecvalcor}
	In this section we validate the observability-invariant quantity, which is the ratio-to-critical-value (Corollary \ref{corrip}). We begin by fitting the MFD function based on three datasets from Section \ref{subsecA1}, namely the date set $\mathcal{D}_2$ with three different camera sets $\mathcal{C}_0$, $\mathcal{C}_1$ and $\mathcal{C}_2$. Similar to our findings in Section \ref{subsecnumfitting}, the vector $\tau^0=(0.4, 0.4, 0.3, 0.3)$ yields the best overall performance, and Table \ref{table2} and Figure \ref{figfitsjan} summarize the corresponding fitting performances. 
	
\begin{table}[H]
		\centering
			\caption{Goodness-of-fit performance measures for $\tau^0=(0.4, 0.4, 0.3, 0.3)$.}
			\begin{tabular}{|c|c|c|c|c|c|c|c|}
				\hline
              \multirow{2}*{Date set}  & \multirow{2}*{Observability}  & \multicolumn{3}{c|}{Original data ${\Omega _1}$} & \multicolumn{3}{c|}{Daily averaged data $\Omega_2 $}  \\
				\cline{3-8}
                                   & & StdErr & $R^2$ & SMAPE & StdErr & $R^2$ & SMAPE \\
				\hline
			 \multirow{3}{*}{$|\mathcal{D}_2|=8$} &	$|\mathcal{C}_0|=1220$ & $2476.4$ & $0.932$ & $16.33\%$& $1434.6$&$0.960$&$12.31\%$ \\
				\cline{2-8}
				& $|\mathcal{C}_1|=1760$ & 2873.6 & 0.931 & 16.54\% & 1710.5 & 0.959 & 12.27\% \\
				\cline{2-8}
				& $|\mathcal{C}_2|=2161$ & 3156.3 & 0.931 & 16.75\% & 1898.1 & 0.958 & 12.24\% \\
				\hline
			\end{tabular}
			\label{table2}
	\end{table}

\begin{figure}[H]
\centering
\includegraphics[width=\textwidth]{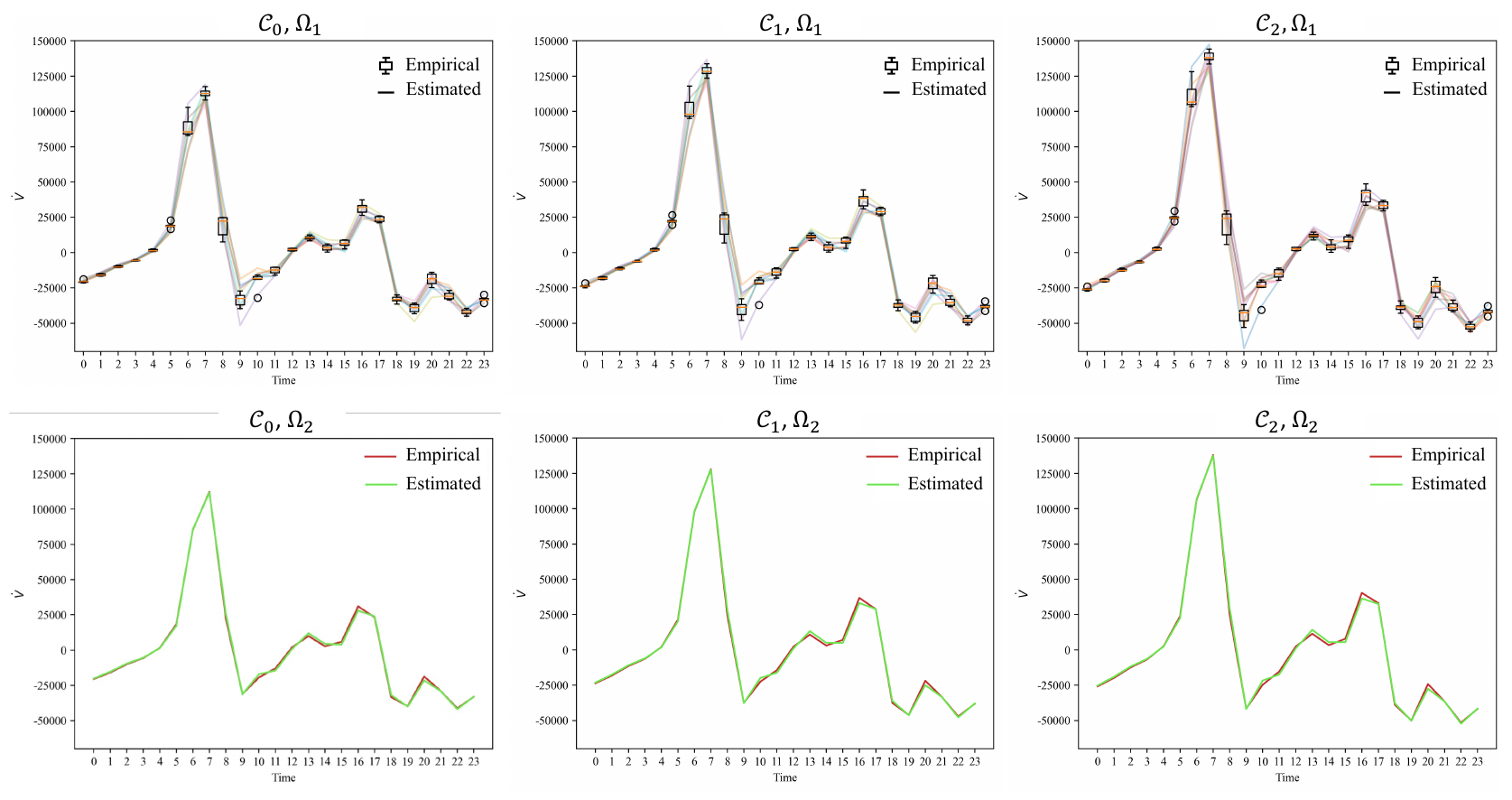}
\caption{Fitting results with three levels of observability: $\mathcal{C}_0, \mathcal{C}_1, \mathcal{C}_2$ over 8 workdays ($\mathcal{D}_2$).}
\label{figfitsjan}
\end{figure}

Next, we plot the fitted MFDs based on $\Omega_1$ and $\Omega_2$ for $|\mathcal{C}_0|=1220$, $|\mathcal{C}_1|=1760$ and $|\mathcal{C}_2|=2161$ in Figure \ref{fig6MFDs}. As the camera set is augmented, the critical values are shifted to the right, and the peaks become higher. In particular, for $\Omega_1$:
$$
V_{\mathcal{C}_0}^*=413256,\quad V_{\mathcal{C}_1}^*=475613= 1.151V_{\mathcal{C}_0}^*,\quad V_{\mathcal{C}_2}^*=516672= 1.250 V_{\mathcal{C}_0}^*.
$$
\noindent For $\Omega_2$:
$$
V_{\mathcal{C}_0}^*=433935,\quad V_{\mathcal{C}_1}^*=499697= 1.152V_{\mathcal{C}_0}^*,\quad V_{\mathcal{C}_2}^*=542926=1.251 V_{\mathcal{C}_0}^*.
$$
\noindent It can be seen that the peaks of these different MFD curves agrees nicely with the scaling factors $k_{0,1}=1.150$ (with $0.17\%$ error) and $k_{0,2}=1.266$ (with $1.26\%$ error), which are obtained from Figure \ref{figalljan}. This verifies item (2) of Corollary \ref{corrip}.

\begin{figure}[H]
\centering
\includegraphics[width=\textwidth]{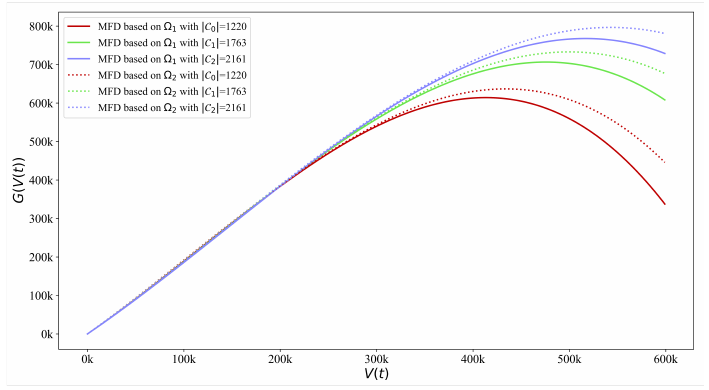}
\caption{Fitted MFD-VD based on two data-fitting sets $\Omega_1$ and $\Omega_2$, with three sets of cameras $\mathcal{C}_0$, $\mathcal{C}_1$ and $\mathcal{C}_2$.}
\label{fig6MFDs}
\end{figure}

The fact that different camera sets lead to different shapes of MFDs, as shown in Figure \ref{fig6MFDs}, is expected. This is because more cameras may capture more vehicles within an observational interval, as is the case in Figure \ref{figalljan}. In other words, the shape of the MFD is not invariant w.r.t. traffic observability, but the ratio-to-critical-value is. This is the main discovery of this work.

Notably, there is a way to remove the influence of traffic observability from the MFD functions, by performing transformation according to \eqref{Gc}. Indeed, let $G_{\mathcal{C}_i}(\cdot)$ be the unique MFD fitted based on $\mathcal{C}_i$, $i=0, 1, 2$. According to item (2) of Corollary \ref{corrip}, $G_{\mathcal{C}_0}(\cdot)$ should coincide with the transformed MFDs: 
$$
{1\over k_{0,1}}G_{\mathcal{C}_1}(k_{0,1} x)\quad  \text{and}\quad {1\over k_{0,2}}G_{\mathcal{C}_2}(k_{0,2} x)
$$
\noindent where $k_{0,1}$ and $k_{0,2}$ are such that $V_{\mathcal{C}_1}(\cdot)=k_{0,1} V_{\mathcal{C}_0}(\cdot)$, $V_{\mathcal{C}_2}(\cdot)=k_{0,2} V_{\mathcal{C}_0}(\cdot)$, per assumption {\bf (A1')}. In Figure \ref{figtransform} (top), $G_{\mathcal{C}_0}(x)$ and ${1\over k_{0,1}}G_{\mathcal{C}_1}(k_{0,1} x)$ agree completely for either fitting methods $\Omega_1$ or $\Omega_2$, as apparent from the overlapping curves. Regarding $\mathcal{C}_0$ and $\mathcal{C}_2$ in the lower subfigure, there are discernible yet very small discrepancies. This is expected for there is higher variation in $V_{\mathcal{C}_2}(t)/V_{\mathcal{C}_0}(t)$, as conveyed by the boxplots in Figure \ref{figuniform}, bringing extra noises to the fitting of the MFDs. 
	
\begin{figure}[H]
\centering
\includegraphics[width=\textwidth]{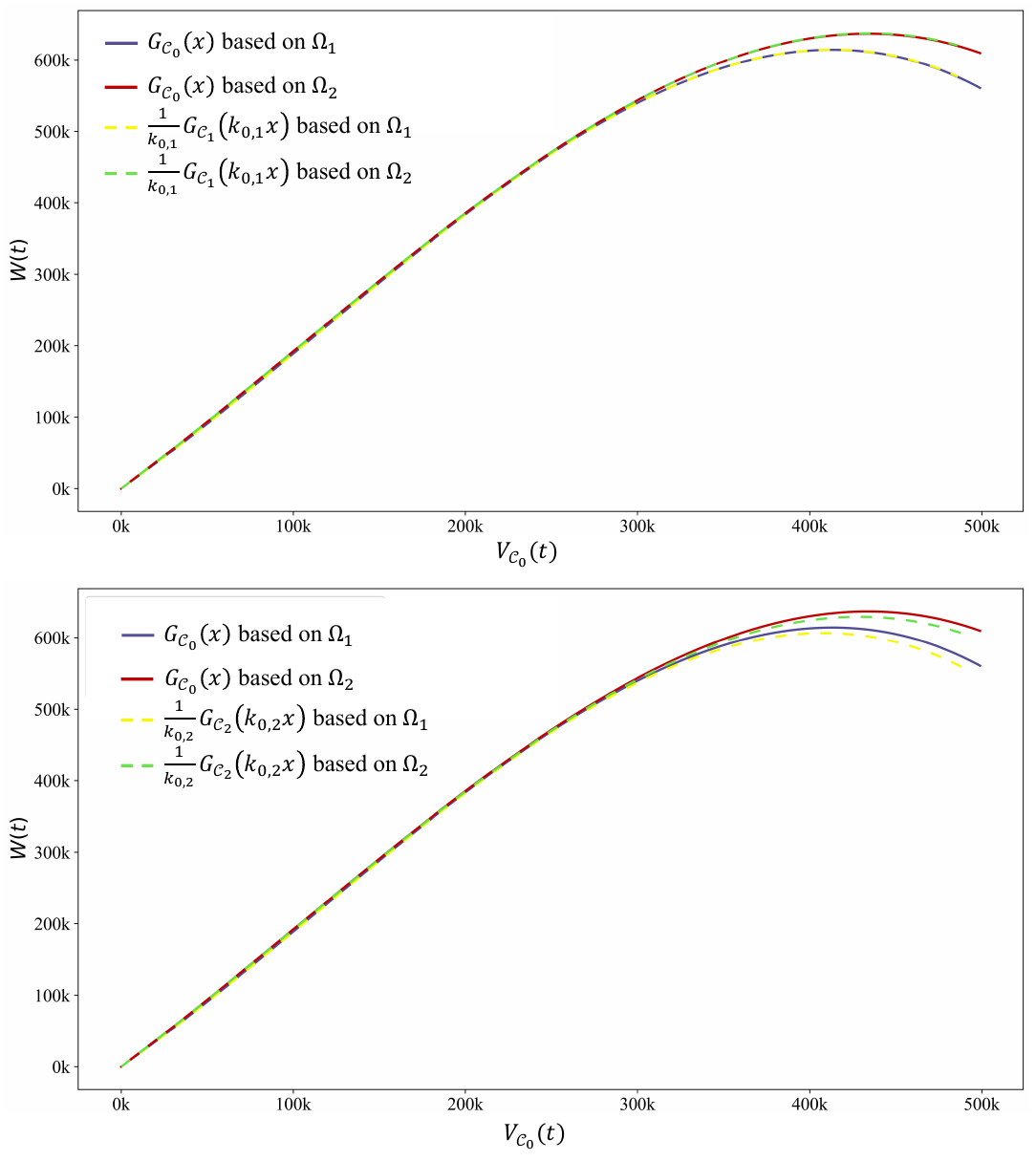}
\caption{Comparison between baseline MFD $G_{\mathcal{C}_0}(x)$ and transformed MFD ${1\over k_{0,1}}G_{\mathcal{C}_1}(k_{0,1}x)$ (top), and ${1\over k_{0,2}}G_{\mathcal{C}_2}(k_{0,2}x)$ (bottom).}
\label{figtransform}
\end{figure}

Next, Figure \ref{figkval} shows the observability-invariant ratio-to-critical-value $V_{\mathcal{C}_i}(\cdot)/ V_{\mathcal{C}_i}^*$. The left subfigure is based on $\Omega_1$, and displays the ratio-to-critical-values with $\mathcal{C}_0$, $\mathcal{C}_1$ and $\mathcal{C}_2$ over the course of 8 workdays. It can be seen that the three curves almost completely overlap. In fact, their mean absolute percentage errors (MAPEs) shown in Table \ref{fig14leftmape} are below 3\% for 24-period and 1\% for peak period (7:00-9:00, 17:00-18:00). The right subfigure focuses on $\Omega_2$, and displays the ratio-to-critical-values over 24 hours. To compare results from $\Omega_1$ and $\Omega_2$, we average $V_{\mathcal{C}_i}(\cdot)/V_{\mathcal{C}}^*(\cdot)$ from $\Omega_1$ by day, and plot them in the right subfigure. It can be seen that although the two datasets yield different MFDs (Figure \ref{fig6MFDs}), the ratio-to-critical-values are almost the same for the three observability levels (see MAPE in Table \ref{fig14rightmape}).

\begin{figure}[H]
\centering
\includegraphics[width=\textwidth]{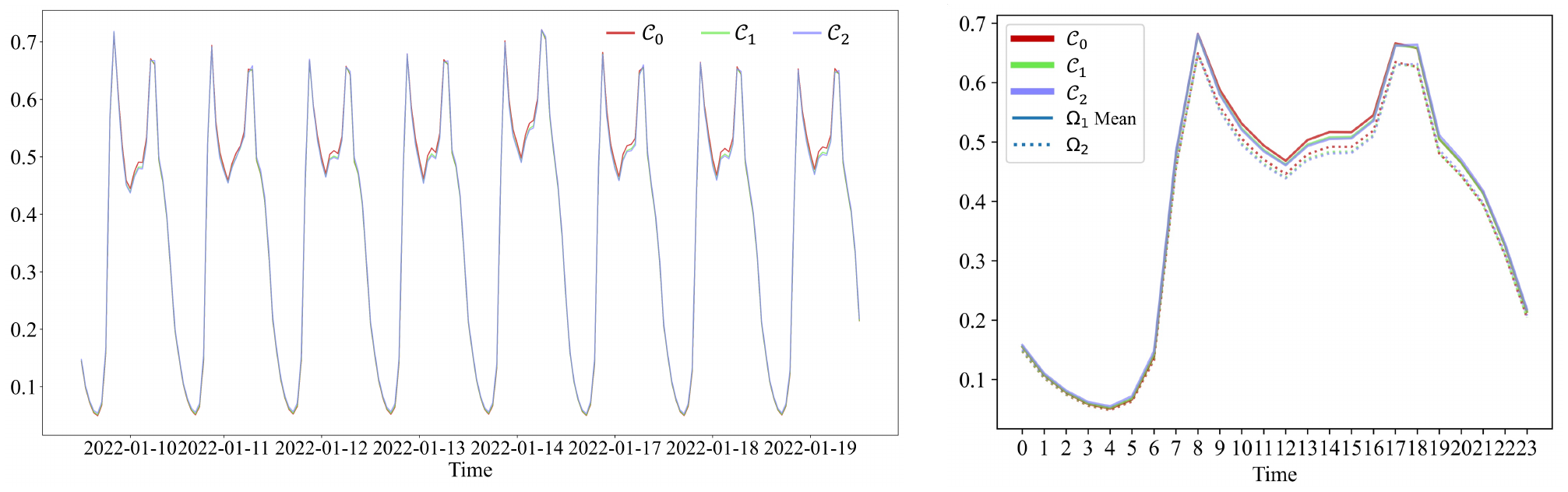}
\caption{Invariant ratio-to-critical-value. Left: $V_{\mathcal{C}_i}(\cdot) / V_{\mathcal{C}_i}^*$ based on $\Omega_1$, $i=0, 1, 2$. Right: $V_{\mathcal{C}_i}(\cdot) / V_{\mathcal{C}_i}^*$ based on $\Omega_2$ (dashed lines), compared to daily averaged $V_{\mathcal{C}_i}(\cdot) / V_{\mathcal{C}_i}^*$ from $\Omega_1$ (solid lines), $i=0,1,2$.}
\label{figkval}
\end{figure}

\begin{table}[H]
\centering
\caption{Difference of the three curves in Figure \ref{figkval} (left) in terms of MAPE. Curve $\mathcal{C}_2$ is used as baseline.}
\label{fig14leftmape}
\begin{tabular}{c|c|c}
		\hline
		&Curve & MAPE
		\\
		\hline
		\multirow{3}{*}{Whole day}& $\mathcal{C}_0$  & 2.46\%
		\\
		\cline{2-3}
		& $\mathcal{C}_1$ & 1.48\%
		\\
		\cline{2-3}
		& $\mathcal{C}_2$ & -
		\\
		\hline
		\multirow{2}{*}{Peak period}& $\mathcal{C}_0$  & 0.64\%
		\\
		\cline{2-3}
		\multirow{2}{*}{(7-9:00, 17-18:00)}& $\mathcal{C}_1$ & 0.36\%
		\\
		\cline{2-3}
		& $\mathcal{C}_2$ & -
		\\
		\hline
	\end{tabular}
\end{table}

\begin{table}[H]
\centering
\caption{Difference of the six curves in Figure \ref{figkval}  (right) in terms of MAPE. Curve $\mathcal{C}_2$ ($\Omega_2$) is used as baseline.}
\label{fig14rightmape}
\begin{tabular}{c|c|c|c|c}
		\hline
		&Curve& MAPE&Curve& MAPE
		\\
		\hline
		\multirow{3}{*}{Whole day}& $\mathcal{C}_0$($\Omega_2$)  & 2.43\% & $\mathcal{C}_0$($\Omega_1$ Mean)  & 4.51\%
		\\
		\cline{2-5}
		& $\mathcal{C}_1$($\Omega_2$) & 1.47\%& $\mathcal{C}_1$($\Omega_1$ Mean) & 3.87\%
		\\
		\cline{2-5}
		& $\mathcal{C}_2$($\Omega_2$) & - & $\mathcal{C}_2$($\Omega_1$ Mean) & 5.08\%
		\\
		\hline
		\multirow{2}{*}{Peak period}& $\mathcal{C}_0$($\Omega_2$)  & 0.56\%& $\mathcal{C}_0$($\Omega_1$ Mean)  & 5.01\%
		\\
		\cline{2-5}
		\multirow{2}{*}{(7-9:00, 17-18:00)}& $\mathcal{C}_1$($\Omega_2$) & 0.34\% & $\mathcal{C}_1$($\Omega_1$ Mean) & 4.81\%
		\\
		\cline{2-5}
		& $\mathcal{C}_2$($\Omega_2$) & -& $\mathcal{C}_2$($\Omega_1$ Mean) & 5.08\%
		\\
		\hline
	\end{tabular}
\end{table}

Finally, we acknowledge the fact that all the results presented so far are based on the chosen Road Congestion Indices from Table \ref{tabRCI}. To understand the impact of such choices on our final results, we consider two more sets of RCI values, respectively shown as `Low' and `High' in Table \ref{tabRCIs}.  

\begin{table}[h]
\centering
\caption{Different numerical settings of the RCIs}
\label{tabRCIs}
\begin{tabular}{c|c|c|c|c}
\hline
      Traffic  status & Free-flow & Slow  & Congested & Heavily congested
    	\\
    	\hline
	RCI interval & $[1.0,\, 1.5)$  & $[1.5,\, 2.0)$ & $[2.0,\, 4.0)$ & $[4.0,\, \infty)$  
        \\
        \hline
        Low (lower 25\%) & 1.125 & 1.625 & 2.5 & 4.25 
        \\
        \hline
	Med (50\%)   & 1.250 & 1.750  & 3.0 & 5.00
    	\\
    	\hline
	High (upper 25\%) & 1.375 & 1.875 & 3.5 & 6.00
	\\
    	\hline
\end{tabular}
\end{table}

We note that the network congestion index $D(t)$ and free-flow time $\tau^0$ appear in the form of a product in the ODE \eqref{VCODE}, which means the fitted MFD, as well as ratio-to-critical-values, depend only on their product, not individual values. For different choices in Table \ref{tabRCIs}, the resulting $D(\cdot)$'s, which are averaged over the 8 workdays, are shown in Figure \ref{figLMH}(a). Note that they may correspond to different $\tau^0$ in the fitting process, e.g. higher $D(t)$ corresponds to lower $\tau^0$, but their product $\tau^0 D(t)$ is supposed to be the same. Table \ref{tabR2sforLMH} shows the fitting performance (in terms of $R^2$) with the low, median and high $D(t)$, under the same $\tau^0=(0.4, 0.4, 0.3, 0.3)$\footnote{The reason for the same $\tau^0$ is that we set the increment of $\tau^0_i$ to be 0.1 hr, and $(0.4, 0.4, 0.3, 0.3)$ remain the optimal fitting value for all three $D(t)$'s.}.

\begin{table}[h]
\centering
\caption{$R^2$ of the MFD function fitting with low, median and high $D(\cdot)$.}
\label{tabR2sforLMH}
\begin{tabular}{c|c|c|c}
\hline
      & $\mathcal{C}_0$ & $\mathcal{C}_1$  & $\mathcal{C}_2$ 
        \\
        \hline
        Low  & 0.936 & 0.935 & 0.934
        \\
        \hline
	Med   & 0.932 & 0.931  & 0.931 
    	\\
    	\hline
	High & 0.927 & 0.927 & 0.926 
	\\
    	\hline
\end{tabular}
\end{table}

Figures \ref{figLMH}(b)-(d) show the ratio-to-critical-values corresponding to the low, median and high $D(t)$. It can be seen that the ratio-to-critical-values are very similar for the three cases, and are slightly higher for larger $D(t)$. Table \ref{fig15mape} further shows the difference (MAPE) between the individual curves of Figures \ref{figLMH}(b)-(d). Their differences are negligible, which means the results are not sensitive to the different choices of RCIs.

\begin{figure}[H]
\centering
\includegraphics[width=\textwidth]{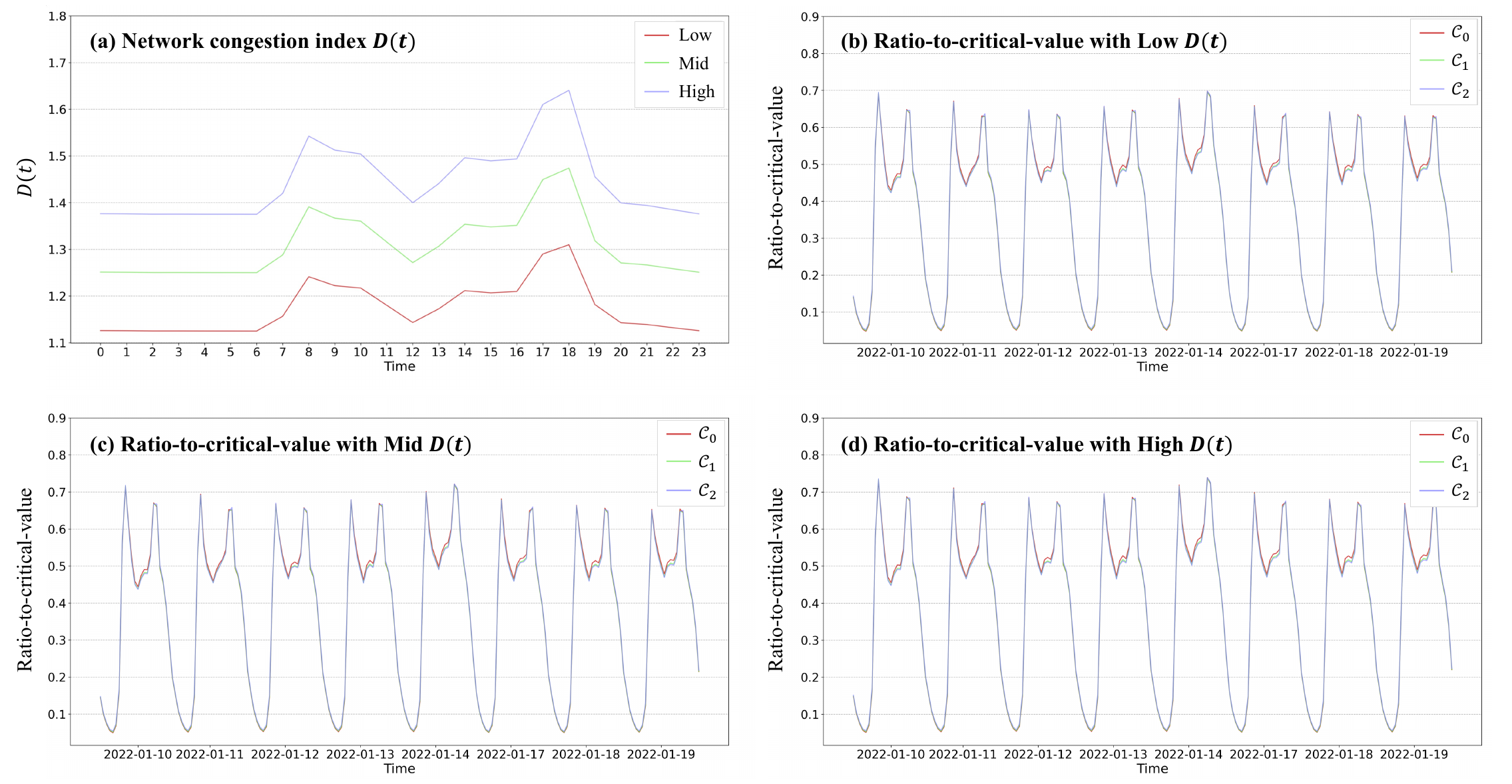}
\caption{Sensitivity of ratio-to-critical-value to RCI values.}
\label{figLMH}
\end{figure}

\begin{table}[H]
	\centering
	\caption{Difference of the three curves in \ref{figLMH} (b)-(d) in terms of MAPE. Curve $\mathcal{C}_2$ is used as baseline.}
	\label{fig15mape}
	\begin{tabular}{c|c|c|c|c}
		\hline
		& & \multicolumn{3}{c}{MAPE}
		\\
		\cline{1-5}
		&Curve & (b)  &   (c) &   (d)
		\\
		\hline
		\multirow{3}{*}{Whole day}&$\mathcal{C}_0$ &  $2.47\%$& $ 2.46\%$& $2.44\%$
		\\
		\cline{2-5}
		& $\mathcal{C}_1$ & $1.50\%$&  $ 1.48\% $ & $ 1.47\% $
		\\
		\cline{2-5}
		& $\mathcal{C}_2$ & - &  - & -
		\\
		\hline
		\multirow{2}{*}{Peak period}&$\mathcal{C}_0$  & 0.63\%& 0.64\%& 0.72\%
		\\
		\cline{2-5}
		\multirow{2}{*}{(7-9:00, 17-18:00)} & $\mathcal{C}_1$ & 0.40\%&  0.36\%& 0.35\%
		\\
		\cline{2-5}
		& $\mathcal{C}_2$ & - & - &  -
		\\
		\hline
	\end{tabular}
\end{table}

\subsection{The invariance property with different time granularity}\label{subsectimegra}

In this section, we assess the impact of smaller time step size $\Delta s$ (i.e. 30min, 20min, 10min) on the MFD-VD model and the invariance property. The key expectation is that the MFD and ratio-to-critical-value should be robust against smaller time steps employed. Another concern is that the previous $\Delta s=60$min could have averaged out some transient dynamics, which will be analyzed in depth here. Throughout this section, the camera set is chosen to be $|\mathcal{C}_2|=2161$.

We begin with the network volume $V^{\Delta s}_{\mathcal{C}_2}(t_s)$ with $\Delta s=$ 60min, 30min, 20min and 10min, as shown in Figure \ref{figVDs}(a). It can be seen that the resulting time series are highly correlated. In fact, by multiplying appropriate values $k_{\text{30min}}=1.6$, $k_{\text{20min}}=2.1$, $k_{\text{10min}}=3.2$ to the corresponding time series $V^{\Delta s}_{\mathcal{C}_2}$, we recover almost identical curves, as visualized in Figure \ref{figVDs}(b) and analyzed in Table \ref{fig16(b)mape}. Several observations are made:
\begin{itemize}
\item[(1)] The traffic dynamics are relatively stable within the 1-hr period, because the volume curves with various $\Delta s$ are linearly correlated; 
\item[(2)] The differences (MAPEs) increase with higher temporal resolution (smaller $\Delta s$), indicating the presence of complex dynamics at smaller time scales in the network;
\item[(3)] $k_{\Delta s}$ is considerably smaller than 60min/$\Delta s$ (e.g. $k_{30\text{min}}=1.6< 60/30=2$, $k_{10\text{min}}=3.2<60/10=6$). This is a manifestation of traffic occupying multiple periods, or cars not captured by cameras in certain periods. More on such phenomenon will be pursued in future work. 
\end{itemize}

\begin{figure}[H]
\centering
\includegraphics[width=\textwidth]{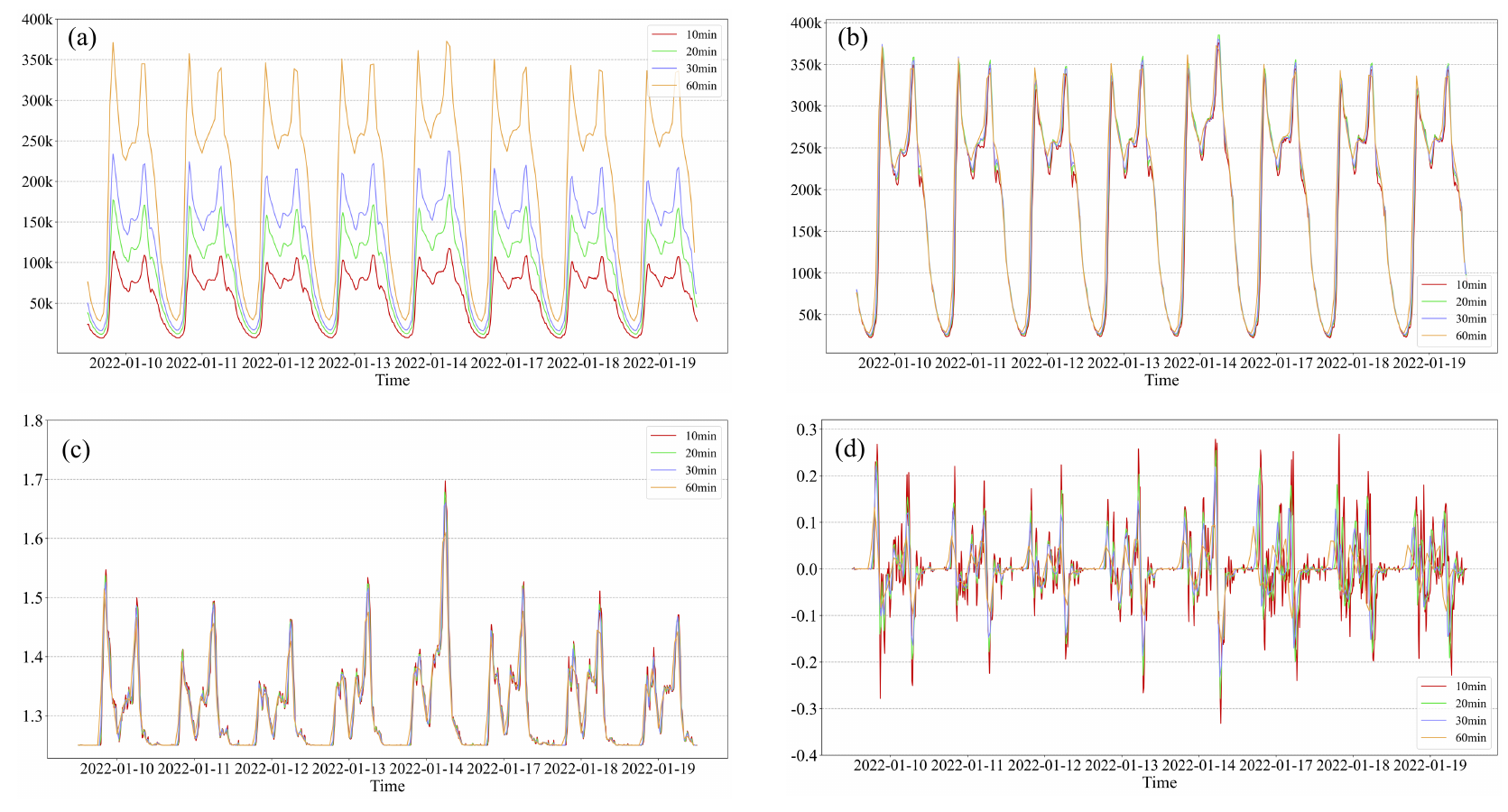}
\caption{Volume and delay with different time steps. (a) Observed network volume; (b) Observed network volume rescaled with a multiplicative constant; (c) Network congestion index $D(\cdot)$; (d) Time derivative of congestion index $\dot D(\cdot)$.}
\label{figVDs}
\end{figure}

\begin{table}[h]
	\centering
	\caption{Difference of the four curves in Figure \ref{figVDs} (b)-(c) in terms of MAPE. Curve `60 min' is used as baseline.}
	\label{fig16(b)mape}
	\begin{tabular}{c|c|c|c}
		\hline
		& & \multicolumn{2}{c}{MAPE}
		\\
		\cline{1-4}
		&Curve& (b) & (c)
		\\
		\hline
		\multirow{4}{*}{Whole day}& $10$ min  & 9.77\% &  1.10\%
		\\
		\cline{2-4}
		& $20$ min & 7.44\%&  0.90\%
		\\
		\cline{2-4}
		& $30$ min & 6.01\%&  0.67\%
		\\
		\cline{2-4}
		& $60$ min & - & -
		\\
		\hline
		\multirow{3}{*}{Peak period}& $10$ min  & 7.86\%&  2.03\%
		\\
		\cline{2-4}
		\multirow{3}{*}{(7-9:00, 17-18:00)}& $20$ min & 5.40\%&  1.65\%
		\\
		\cline{2-4}
		& $30$ min & 4.05\%&  1.25\%
		\\
		\cline{2-4}
		& $60$ min & - &  -
		\\
		\hline
	\end{tabular}
\end{table}

Next, the network delay $D(s)$ and $\dot D(s)$ for $\Delta s=60$min, 30min, 20min and 10min are shown in Figure \ref{figVDs}(c) \& (d), respectively. Compared to network volume, the delay $D(\cdot)$ undergoes significant variations at smaller temporal scales, which renders even higher fluctuations in the derivative $\dot D(\cdot)$. This indicates the existence of transient yet considerable congestion at the network level. Such variations of congestion at fine temporal scales could bring extra noises to the multilinear regression of the MFD function; indeed, we have
$$
R^2 = 0.916~ \text{(30 min)}, \quad R^2 = 0.901~ \text{(20 min)},\quad R^2 =  0.861 ~\text{(10 min)},
$$
decreasing with the time step size. 

\begin{figure}[H]
\centering
\includegraphics[width=\textwidth]{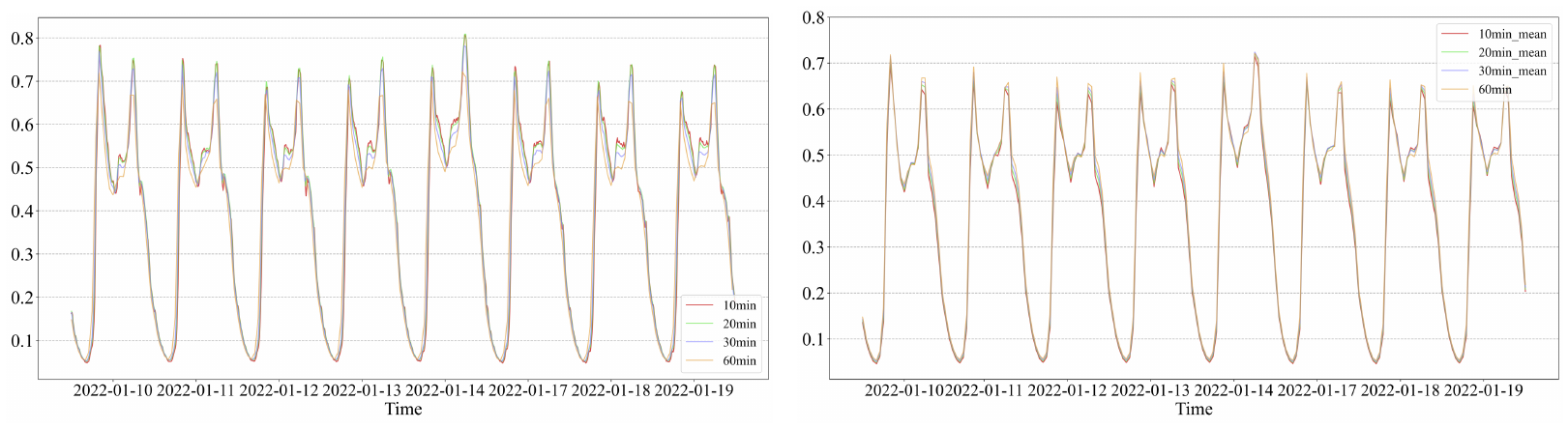}
\caption{Left: Ratios-to-critical-values with different time step sizes; Right: Hourly averages of the ratios-to-critical-values with different time step sizes.}
\label{figRTCVdelta}
\end{figure}

\begin{table}[h]
	\centering
	\caption{Difference of the four curves in Figure \ref{figRTCVdelta} in terms of MAPE. Curve `60 min' is used as baseline.}
	\label{fig17mape}
	\begin{tabular}{c|c|c|c|c}
		\hline
		&\multicolumn{2}{c|}{Left: Different step sizes}&\multicolumn{2}{c}{Right: Same step size (60 min)} 
		\\
		\hline
		&Curve & MAPE &Curve& MAPE
		\\
		\cline{1-5}
		\multirow{4}{*}{Whole day}& 10min & 12.19\%& 10min\_mean& 6.57\%
		\\
		\cline{2-5}
		& 20min& 10.86\%& 20min\_mean&  4.31\%
		\\
		\cline{2-5}
		& 30min& 8.00\%& 30min\_mean&  2.56\%
		\\
		\cline{2-5}
		& 60min & -& 60min& -
		\\
		\hline
		\multirow{3}{*}{Peak period}& 10min & 7.72\%& 10min\_mean&  3.66\%
		\\
		\cline{2-5}
		\multirow{3}{*}{(7-9:00, 17-18:00)}& 20min & 6.39\%& 20min\_mean &  2.25\%
		\\
		\cline{2-5}
		& 30min & 4.84\%& 30min\_mean&  1.27\%
		\\
		\cline{2-5}
		& 60min & - & 60min&  -
		\\
		\hline
	\end{tabular}
\end{table}

Finally, Figure \ref{figRTCVdelta} (left) shows the ratios-to-critical-values corresponding to various time step sizes. The higher delays with finer temporal resolution, as depicted in Figure \ref{figVDs}(c), render an overestimation of the ratios-to-critical-values (especially $20$ min and $10$ min). As shown in Table \ref{fig17mape}, the MAPE relative to the `60 min' curve grows with smaller time intervals. Nevertheless, when we average these quantities to the same temporal scale (60 min), the resulting shapes are almost identical, as shown in Figure \ref{figRTCVdelta} (right) and Table \ref{fig17mape} (with MAPE below 6.57\%). Such similarities can be explained via the concept of {\it temporally partial} observation and the invariance property. Indeed, if we consider 1-hr vehicle count as full observation, then a 20-min count within that hour is only partial observation. Figure \ref{figVDs}(b) suggests that a condition analogous to assumption {\bf (A1')} holds in this case:

\vspace{0.05 in}
\noindent {\bf Assumption (A2)} Let time step $\Delta s$  be a divider of 60min, then $V^{\text{60min}}_{\mathcal{C}_2}(t_{s'})=k_{\Delta s}\cdot V^{\Delta s}_{\mathcal{C}_2}(t_s)$, $\forall t_s$ such that $[t_s,\,t_{s+1})\subset [t_{s'},\,t_{s'+1})$, for some $k_{\Delta s}\geq 1$ that depends only on $\Delta s$.
\vspace{0.05 in}

\noindent Using {\bf (A2)}, we can easily show the invariance property, provided that the delay $D$ is unchanged with smaller time steps. Nevertheless, the actual delays corresponding to smaller step sizes are higher, especially at some time instances during traffic peaks, indicating higher congestion than what the 1-hr version has reckoned. Therefore,  the ratios-to-critical-values with smaller step sizes are higher.

\section{Conclusion}\label{secConclude}

This paper presents a macroscopic fundamental diagram (MFD) model with a volume-delay relationship (MFD-VD), using license plate cameras (LPCs) and road congestion indices (RCIs). The LPCs are used to calculate (partial) network accumulation (volume) while the RCIs are used to estimate average network delays. The key findings of this paper are as follows.

\begin{itemize}
\item[(1)] Based on volume LPC and RCI data, an ordinary differential equation (ODE), involving volume $V(\cdot)$ and delay $D(\cdot)$, is derived based on flow conservation, flow propagation and an accumulation-based MFD on a network level. 

\item[(2)] Building on such an ODE, the empirical data fitting of the MFD function amounts to a multilinear regression parameterized by $\tau^0$, which can be readily performed and evaluated. The goodness-of-fit performances for multi-day ($\Omega_1$) and single-day ($\Omega_2$) fitting are $(R^2,\,\text{SMAPE})=(0.93,\,17\%)$ and $(R^2,\,\text{SMAPE})=(0.96,\,12\%)$ (Table \ref{table2}). Furthermore, the impact of $\tau^0$ on the goodness-of-fit appears nonlinear, which should be further investigated numerically. 

\item[(3)] Assuming {\bf (A1)} or {\bf (A1')} holds, the quantity of ratio-to-critical-value, which is an indicator of network saturation and efficiency, is invariant w.r.t. the level of observability afforded by the LPCs. This means that, even though the true network volume cannot be fully observed, working with a proper set of cameras could still estimate such an important quantity. 

\item[(4)] For {\bf (A1)} or {\bf (A1')} to hold, the road links either covered by the LPCs, or adjacent to those covered, should be evenly distributed in the network and contain as many high-volume links as possible. Such conditions have been verified using real-world data, which also prove the validity of {\bf (A1)} or {\bf (A1')}. 

\item[(5)] Compared to conventional data sources such as loop detector data and floating car data, the proposed framework does not require full observability of the network accumulation nor {\it a priori} known proportion of detected vehicles. In addition, uniform distribution of the cameras is also unnecessary; see (4).

\item[(6)] The ratio-to-critical-value is also invariant w.r.t. the time step size. This is attributed to the a similar invariance principle with temporally partial observation. According to our empirical study, an assumption analogous to {\bf (A1')} holds, which concerns with network volume calculated using different time step sizes. 
\end{itemize}

As future search, the assumptions {\bf (A1)} or {\bf (A1')} will be further investigated regarding the possibly time-dependent scaling factor $k$. Moreover, the volume-delay relationship will be analyzed in various forms of spatial and temporal aggregation to analyze network performance in the volume-delay domain. As a final note, the network volume $V_{\mathcal{C}}(t_s)$ obtained from camera set $\mathcal{C}$ during certain time interval $[t_s,\,t_{s+1})$ displays remarkable scalability, both spatially (i.e. with different camera sets) and temporally (i.e. with different time step sizes). This is a unique characteristic of license plate camera data, and is the foundation of the invariance property derived in this paper. Future research will explore more applications of this feature.

As a data-driven study, the applicability and validity of our model  depends on the availability and quality of data. In addition to the aforementioned conditions regarding LPC and RCI data, which are essential, real-world issues such as sensor or transmission failure could compromise the performance of our model. Robust analysis against these realistic scenarios will be pursued in the future.

\section*{Acknowledgement}
This work is supported by the National Natural Science Foundation of China through grants 72071163 and 72101215, and the Natural Science Foundation of Sichuan Province through grants 2022NSFSC0474.

\section*{Appendix. Validation of Eqn \eqref{eqnproxi}}

We start by calculating the right hand side of \eqref{eqnproxi} for every interval $[t, t+\Delta t]$:
\begin{equation}\label{eqna1}
 {{1\over \sum_{r\in R}{\omega_r}}\sum_{r\in R}\omega_r\sum_{l\in r}\text{FTT}_l\cdot \Phi_l\big(t_l,\,t_l+\tau_l\big)\over {1\over\sum_{r\in R}{\omega_r}}\sum_{r\in R}\omega_r\sum_{l\in r}\text{FTT}_l}
\end{equation}
An exact method based on straightforward bookkeeping is devised for calculating route travel time for $r=(l_1,\ldots, l_n)$ with departure time $t_{l_1}$, where the link travel time is given as $\text{FTT}_l\cdot \Phi(t_l,  t_l+\tau_l)$. Then, the route choices as well as trip distributions are treated via a simulation experiment. Specifically, within the target network we generate large number of trips (over 1.7 million) with random origin-destination pairs and random departure times within $[t, t+\Delta t]$ (the shortest route under the free-flow condition is used as default). Then, both the numerator and denominator in \eqref{eqna1} are calculated as the average over these large number of random trips. 

The quantity shown in \eqref{eqna1} is compared with $\Phi(t, t+\Delta t)$ in terms of mean absolute percentage error (MAPE) to measure the goodness of approximation. Figure \ref{figApp1} shows the MAPEs for $t=0,\ldots, 23$ and $\Delta t=1$ hr, where the box plots summarize the results from six different days. It can be seen that most of the errors are below 3\%, with a few higher ones in the morning and afternoon peaks. This shows the validity of the proposed approximation, namely Eqns \eqref{eqndass0} and \eqref{eqnproxi}.

\begin{figure}[H]
\centering
\includegraphics[width=.6\textwidth]{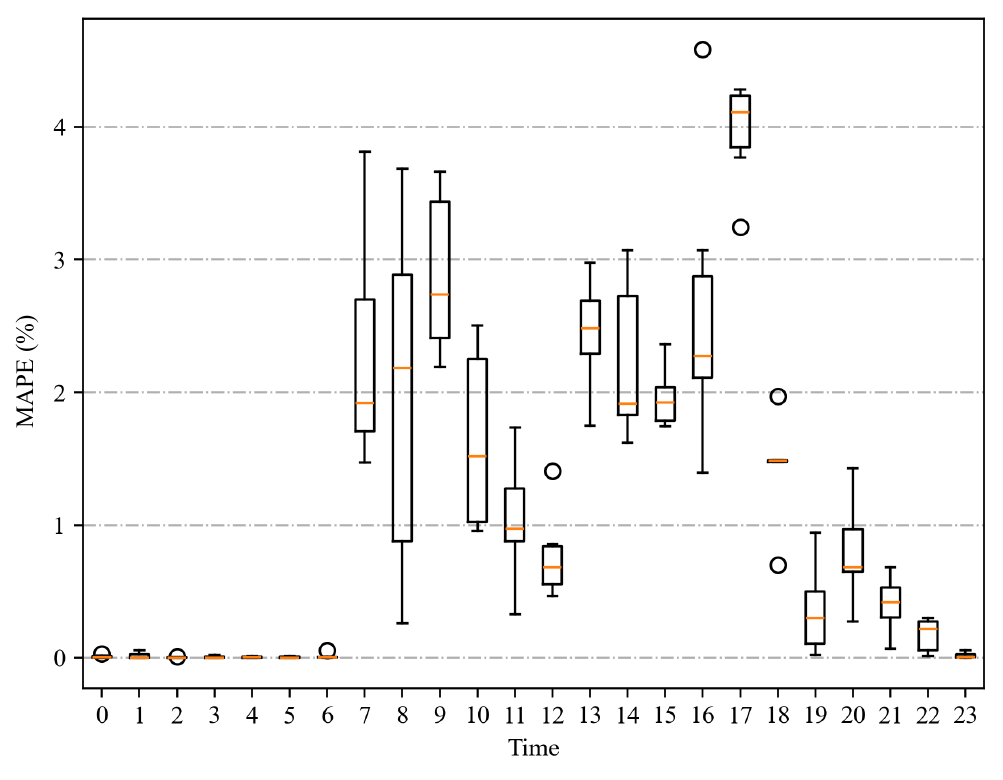}
\caption{Time-varying MAPEs between $\Phi(t, t+\Delta)$ and the quantity in \eqref{eqna1} estimated from simulations.}
\label{figApp1}
\end{figure}

\end{document}